\def\B{\mathscr B}
\def\C{\mathbb C}
\def\d{\mathrm d}
\def\F{\mathscr F}
\def\G{\mathcal G}
\def\H{\mathcal H}
\def\M{\mathsf M}
\def\N{\mathbb N}
\def\O{\mathcal O}
\def\P{\mathcal P}
\def\R{\mathbb R}
\def\Hrond{\mathscr H}
\def\Oa{{\mathcal O}_{\rm as}}
\def\ltwo{\mathsf{L}^{\:\!\!2}}
\def\linf{\mathsf{L}^{\:\!\!\infty}}
\def\e{\mathop{\mathrm{e}}\nolimits}
\DeclareMathOperator*{\im}{Im}
\DeclareMathOperator*{\re}{Re}
\DeclareMathOperator*{\slim}{s\;\!-lim\;\!}
\def\Ran{\mathop{\mathsf{Ran}}\nolimits}
\DeclareMathOperator*{\Tr}{Tr}
\newtheorem{Theorem}{Theorem}[section]
\newtheorem{Lemma}[Theorem]{Lemma}
\newtheorem{Corollary}[Theorem]{Corollary}
\newtheorem{Proposition}[Theorem]{Proposition}
\begin{document}

%--------------------------------------------------------------------------------------
% Title
%--------------------------------------------------------------------------------------

\title{Resolvent expansions and continuity of the scattering matrix\\
at embedded thresholds: the case of quantum waveguides}

\author{S. Richard$^1$~~and R. Tiedra de
Aldecoa$^2$\footnote{Supported by the Chilean Fondecyt Grant 1130168 and by the
Iniciativa Cientifica Milenio ICM RC120002 ``Mathematical Physics'' from the Chilean
Ministry of Economy.}}

\date{\small}
\maketitle \vspace{-1cm}

\begin{quote}
\emph{
\begin{itemize}
\item[$^1$] Graduate school of mathematics, Nagoya University,
Chikusa-ku, Nagoya 464-8602, Japan; On leave of absence from
Universit\'e de Lyon; Universit\'e
Lyon 1; CNRS, UMR5208, Institut Camille Jordan,
43 blvd du 11 novembre 1918, F-69622
Villeurbanne-Cedex, France.
\item[$^2$] Facultad de Matem\'aticas, Pontificia Universidad Cat\'olica de Chile,\\
Av. Vicu\~na Mackenna 4860, Santiago, Chile
\item[] \emph{E-mails:} richard@math.univ-lyon1.fr, rtiedra@mat.puc.cl
\end{itemize}
}
\end{quote}

%--------------------------------------------------------------------------------------

\begin{abstract}
We present an inversion formula which can be used to obtain resolvent expansions near
embedded thresholds. As an application, we prove for a class of quantum waveguides
the absence of accumulation of eigenvalues and the continuity of the scattering
matrix at all thresholds.
\end{abstract}

\textbf{2010 Mathematics Subject Classification:} 47A10, 81U35, 35J10.

\smallskip

\textbf{Keywords:} Thresholds, resolvent expansions, scattering matrix, quantum
waveguides.

%--------------------------------------------------------------------------------------
\section{Introduction}\label{Intro}
\setcounter{equation}{0}
%--------------------------------------------------------------------------------------

During the recent years, there has been an increasing interest in resolvent
expansions near thresholds and their various applications. These developments were
partially initiated by the paper of A. Jensen and G. Nenciu \cite{JN01} in which a
general framework for asymptotic expansions is presented and then applied to
potential scattering in dimension $1$ and $2$. The key point of that paper is an
inversion formula which provides an efficient iterative method for inverting a family
of operators $A(z)$ as $z\to0$ even if $\ker\big(A(0)\big)\ne\{0\}$. Corrections or
improvements of this inversion formula can be found in \cite[Lemma 4]{ES04},
\cite[Prop.~3.2]{IJ13} and \cite[Prop.~1]{JN04}. However, in all these papers either
it is assumed that $A(0)$ is self-adjoint, or the construction relies on a Riesz
projection which is not always convenient to deal with. These features are harmless
in these works, since the threshold considered always lies at the endpoints of the
spectrum of the underlying operator. However, once dealing with embedded thresholds,
these features turn out to be critical (see the comment at the end of Section
\ref{seco}).

Our aim in the present paper is thus twofold. On the one hand, we revisit the
mentioned inversion formula, and on the other hand we show how its revised version
can be used for proving the continuity of a scattering matrix at embedded thresholds.
The abstract part of our results is presented in Section \ref{Sec_Inv}, and
consists first in a reformulation of the inversion formula which does not require that
$A(0)$ is self-adjoint or that the projection is a Riesz projection (see Proposition
\ref{propourversion}). We then discuss two natural choices for the projection\;\!:
either the Riesz projection defined in terms of the resolvent of $A(0)$ if $0$ is an
isolated point in the spectrum of $A(0)$, or the orthogonal projection on
$\ker\big(A(0)\big)$ if $A(0)$ has a non-negative imaginary part. If both conditions
hold, we also discuss the relations between these two projections, and provide
sufficient conditions for their equality. This situation often takes place in
applications even without the assumption that $A(0)$ is self-adjoint (see Corollary
\ref{Corol_So}).

In the second part of the paper (Section \ref{secwave}), we present an application
of our abstract results to scattering theory for quantum waveguides.
Quantum waveguides provide a particularly good model of study since their Hamiltonians
possess an infinite number of embedded thresholds (with a change of multiplicity at each threshold)
but give rise to a simple scattering theory taking place in a one-Hilbert space setting.
We refer to \cite{Tie06} for basic results and earlier references on the spectral
and scattering theory for quantum waveguides.

For a straight quantum waveguide with a compactly supported potential $V$, we derive
an asymptotic expansion of the resolvent in a neighbourhood of each embedded
threshold. More precisely, if the potential is written as $V=vuv$ with $v$
non-negative and $u$ unitary and self-adjoint, and if $H_0$ is the Dirichlet
Laplacian for the waveguide, then we give an expansion of the operator
$\big(u+v(H_0-z)^{-1}v\big)^{-1}$ as $z$ converges to any threshold $z_0$ (see
Proposition \ref{Prop_Asymp}). Note that the operator $v(H_0-z_0)^{-1}v$ (once
properly defined) has a non-trivial imaginary part. This fact automatically prevents
the use of any approach assuming the self-adjointness of $A(0)$, as mentioned above.

We then deduce two consequences of this asymptotic expansion. First, we prove in
Corollary \ref{noaccp} that the possible point spectrum of the operator $H:=H_0+V$
does not accumulate at thresholds. Since the thresholds are the only possible
accumulation points for such a model, we thus rule out this possibility.
Second, we characterize for all scattering channels corresponding to the transverse modes of
the waveguide the behavior of the scattering matrix for the pair $\{H_0,H\}$ at embedded
thresholds. More precisely, we show that the scattering matrix is continuous at the
thresholds if the channels we consider are already open, and that the scattering
matrix has a limit from the right at the thresholds if a channel precisely opens at
these thresholds (see Proposition \ref{propcon} for a precise formulation of this
result). Up to our knowledge, these types of results are completely new since the
analysis of the behavior of a scattering matrix at embedded thresholds has apparently
never been performed. We also show the continuity of the scattering matrix at
embedded eigenvalues which are not located at thresholds. But in this case, similar
results were already known for other models, see for example \cite[Prop.~10]{IR12} or
\cite[Prop.~6.7.11]{Y} (see also \cite{GJY04} where propagation estimates at embedded
thresholds are obtained for a Schr\"odinger operator with time periodic potential).

As a final comment, we stress that we fully describe all possible behaviors at
thresholds since we do not assume any condition on the absence of bound states or
resonances at thresholds. Based on the expressions obtained in this paper, a
Levinson's type theorem for quantum waveguides could certainly be derived, and
deserves further investigations.\\

\noindent
{\bf Acknowledgements.} The authors thank A. Jensen for useful discussions.

%--------------------------------------------------------------------------------------
\section{Inversion formula}\label{Sec_Inv}
\setcounter{equation}{0}
%--------------------------------------------------------------------------------------

In this section, we adapt the inversion formula \cite[Prop.~1]{JN04} to the case of
an arbitrary projection, and then discuss two possible choices for
this projection. The symbol $\H$ stands for an arbitrary Hilbert space with norm
$\|\cdot\|$ and scalar product $\langle\;\!\cdot\;\!,\;\!\cdot\;\!\rangle$, and
$\B(\H)$ denotes the algebra of bounded operators on $\H$ with norm also denoted by
$\|\cdot\|$.

\begin{Proposition}\label{propourversion}
Let $O\subset\C$ be a subset with $0$ as an accumulation point. For each $z\in O$,
let $A(z)\in\B(\H)$ satisfy
$$
A(z)=A_0+zA_1(z),
$$
with $A_0\in\B(\H)$ and $\|A_1(z)\|$ uniformly bounded as $z\to0$. Let also
$S\in\B(\H)$ be a projection such that\;\!:
\begin{enumerate}
\item[(i)] $A_0 + S$ is invertible with bounded inverse,
\item[(ii)] $S(A_0+S)^{-1}S = S$.
\end{enumerate}
Then, for $|z|>0$ small enough the operator $B(z): S\H \to S\H$ defined by
\begin{equation}\label{eqB(z)}
B(z)
:=\frac1z\left(S-S\big(A(z)+S\big)^{-1}S\right)
\equiv S(A_0+S)^{-1}\Bigg(\sum_{j\ge0}(-z)^j\big(A_1(z)(A_0+S)^{-1}\big)^{j+1}\Bigg)S
\end{equation}
is uniformly bounded as $z\to0$. Also, $A(z)$ is invertible in $\H$ with bounded
inverse if and only if $B(z)$ is invertible in $S\H$ with bounded inverse, and in
this case one has
$$
A(z)^{-1}
=\big(A(z)+S\big)^{-1}+\frac1z\big(A(z)+S\big)^{-1}SB(z)^{-1}S\big(A(z)+S\big)^{-1}.
$$
\end{Proposition}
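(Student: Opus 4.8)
The plan is to build the statement from four ingredients: a Neumann expansion of $\big(A(z)+S\big)^{-1}$ for small $|z|$; extraction of the series for $B(z)$ and of its uniform bound using hypothesis (ii); a reduction, via the projection $S$, of the invertibility of $A(z)$ to that of an operator on $S\H$; and a short direct verification of the closed formula for $A(z)^{-1}$.

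First I would factor $A(z)+S=(A_0+S)\big(I+z(A_0+S)^{-1}A_1(z)\big)$ and invert the second factor by a Neumann series. By hypothesis (i) and the uniform bound on $\|A_1(z)\|$, this converges in operator norm, uniformly for $|z|\le\delta$ with $\delta>0$ small, giving
$$
\big(A(z)+S\big)^{-1}=\sum_{j\ge0}(-z)^j\big((A_0+S)^{-1}A_1(z)\big)^j(A_0+S)^{-1}.
$$
Multiplying on both sides by $S$, the $j=0$ term equals $S(A_0+S)^{-1}S=S$ by hypothesis (ii), so it cancels the leading $S$ in the definition of $B(z)$; dividing the remaining series by $z$ and shifting the summation index gives exactly the series in \eqref{eqB(z)}. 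Since the geometric tail is uniformly small, $\sup_{0<|z|\le\delta}\|B(z)\|<\infty$, and every term of the series is of the form $S(\cdots)S$, so $B(z)$ indeed acts from $S\H$ to $S\H$.

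Next I would use the elementary identity $A(z)=\big(A(z)+S\big)\big(I-KS\big)$ with $K:=\big(A(z)+S\big)^{-1}$. As $A(z)+S$ is boundedly invertible for $|z|\le\delta$, $A(z)$ is boundedly invertible iff $I-KS$ is. Writing operators in block form for the (generally non-orthogonal) decomposition $\H=S\H\oplus(I-S)\H$ and using that $S$ kills $(I-S)\H$, the operator $I-KS$ is block lower-triangular, with upper-left entry the compression $S-SKS$ of $I-KS$ to $S\H$ and lower-right entry the identity of $(I-S)\H$; hence $I-KS$ is boundedly invertible iff $S-SKS$ is boundedly invertible on $S\H$, and, reading off the upper-left block of the inverse of a lower-triangular operator, in that case the compression of $(I-KS)^{-1}$ to $S\H$ is $(S-SKS)^{-1}$. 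Since $S-SKS=zB(z)$ by definition, for $z\ne0$ this proves the asserted equivalence between invertibility of $A(z)$ and of $B(z)$, with $\tfrac1zB(z)^{-1}$ the $S\H$-compression of $(I-KS)^{-1}$.

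Finally, assume $B(z)$ invertible, so that $A(z)$ is invertible by the previous step; it then suffices to check that $X:=K+\tfrac1zKSB(z)^{-1}SK$ obeys $A(z)X=I$. Interpreting $SB(z)^{-1}S$ as the bounded operator on $\H$ which is $B(z)^{-1}$ on $S\H$ and $0$ on $\ker S$ (so that $B(z)B(z)^{-1}=S$ on $\H$), and using $A(z)K=I-SK$ together with $S-SKS=zB(z)$, one gets
$$
A(z)\,\tfrac1zKSB(z)^{-1}SK=\tfrac1z(I-SK)SB(z)^{-1}SK=\tfrac1z(S-SKS)B(z)^{-1}SK=B(z)B(z)^{-1}SK=SK,
$$
whence $A(z)X=(I-SK)+SK=I$. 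The routine parts are the Neumann estimate and this last computation; the one point requiring genuine care — and the main obstacle I anticipate — is the consistent bookkeeping for $B(z)$, which lives on the subspace $S\H$ that need not be orthogonally embedded in $\H$: one must fix once and for all how $SB(z)^{-1}S$ is read as an operator on $\H$, keep track of where $S$ acts as an identity, and make sure ``invertible'' always means ``with bounded inverse'' (automatic on a Hilbert space, but best recorded explicitly through $B(z)^{-1}B(z)=B(z)B(z)^{-1}=S$).
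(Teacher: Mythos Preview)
Your argument is correct. The derivation of the series for $B(z)$ and its uniform bound via a Neumann expansion of $\big(A(z)+S\big)^{-1}$, together with condition (ii) to kill the $j=0$ term, is exactly what the paper does (the paper presents the computation slightly differently, writing both terms of $B(z)$ and observing that (ii) annihilates the singular one, but this is the same content).

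For the equivalence of invertibility and the closed formula for $A(z)^{-1}$, the paper simply invokes the inversion formula of \cite[Lemma~2.1]{JN01} and says nothing more. You instead give a self-contained proof: you factor $A(z)=\big(A(z)+S\big)(I-KS)$, observe that $I-KS$ is block lower-triangular with respect to $S\H\oplus(I-S)\H$ with diagonal blocks $S-SKS$ and $I$, reduce invertibility to that of $S-SKS=zB(z)$ on $S\H$, and then verify the formula by a one-line right-inverse check. This is a clean direct derivation of the Feshbach/Schur-complement identity that the paper imports from \cite{JN01}; it buys you independence from the reference and makes transparent exactly where $S^2=S$ and the convention $SB(z)^{-1}S$ on $\H$ enter. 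The trade-off is length: the paper's proof is three lines, yours is a full page, but nothing in your version is wasted.
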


\begin{proof}
For $z\in O$ with $|z|>0$ small enough, one has the equality
\begin{align*}
B(z)
&=\frac1z\big(S-S(A_0+S)^{-1}S\big)
+S(A_0+S)^{-1}\Bigg(\sum_{j\ge0}(-z)^j\big(A_1(z)(A_0+S)^{-1}\big)^{j+1}\Bigg)S.
\end{align*}
So, the condition (ii) implies the second equality in \eqref{eqB(z)}.
The second part of the claim is a direct application of the inversion formula
\cite[Lemma~2.1]{JN01}.
\end{proof}

The choice of the projection $S$ plays an important role in the previous proposition.
For example, if $0$ is an isolated point in the spectrum $\sigma(A_0)$ of $A_0$, a
natural candidate for $S$ is the Riesz projection associated with this value, which
is the choice made in \cite{ES04,JN01,JN04}. Another natural candidate is the
orthogonal projection on the kernel of $A_0$. However, for both choices additional
conditions are necessary in order to verify conditions (i) and (ii). Below, we first
discuss the case of the Riesz projection and then the case of the orthogonal
projection.

%--------------------------------------------------------------------------------------
\subsection{Riesz projection}
%--------------------------------------------------------------------------------------

In this section, we assume that $0$ is an isolated point in $\sigma(A_0)$ and write
$S_r$ for the corresponding Riesz projection. In that case, $A_0S_r=S_rA_0=S_rA_0S_r$
and $A_0+S_r$ is invertible with bounded inverse (see \cite[Chap.~III.6.4]{Kato}).
The condition (ii) above, namely $S_r(A_0+S_r)^{-1}S_r=S_r$, is more complicated to
check. However, if one assumes that $A_0S_r=0$, or the stronger condition that $A_0$
is self-adjoint, then the equalities $S_r(A_0+S_r)^{-1}=S_r=(A_0+S_r)^{-1}S_r$ hold,
and thus condition (ii) is satisfied (note that in that case a small simplification
takes place on the r.h.s. of \eqref{eqB(z)}). However, the condition $A_0S_r=0$ does
not always hold since $A_0S_r$ is in general only quasi-nilpotent
\cite[Sec.~III.6.5]{Kato}. Fortunately, the condition $A_0S_r=0$ holds if $A_0$ has a
particular form, as shown in the following lemma (which is an extension of
\cite[Prop.~2]{JN04}).

\begin{Lemma}\label{lemme_Riesz}
Assume that $A_0=X+i\;\!Y$, with $X,Y$ bounded self-adjoint operators and $Y\ge0$, and
suppose that $0$ is an isolated point in $\sigma(A_0)$. Let $S_r$ be the
corresponding Riesz projection, and assume that $S_rA_0S_r$ is a trace-class
operator. Then, $A_0S_r=S_rA_0=0$.
\end{Lemma}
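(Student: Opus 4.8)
The plan is to show that $A_0 S_r = 0$ by exploiting the trace-class hypothesis together with positivity of $Y$. First I would recall the standard spectral-theoretic facts about the Riesz projection: since $0$ is isolated in $\sigma(A_0)$, we have $A_0 S_r = S_r A_0 = S_r A_0 S_r$, and the operator $N := A_0 S_r = S_r A_0 S_r$ is quasi-nilpotent (its spectrum is $\{0\}$), acting on the possibly infinite-dimensional range of $S_r$. The goal is to upgrade "quasi-nilpotent" to "zero". The trace-class assumption on $S_r A_0 S_r = N$ is what makes this possible: a quasi-nilpotent trace-class operator has $\Tr(N) = 0$, and more importantly $\Tr(N^k) = 0$ for every $k \ge 1$ since each power $N^k$ is again quasi-nilpotent and trace-class.

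The key idea is then to use $Y \ge 0$. Writing $N = S_r A_0 S_r = S_r X S_r + i\, S_r Y S_r$, consider $\Tr(N + N^*) = 2\Tr(S_r X S_r)$, which is real, but also consider $\Tr\big(i^{-1}(N - N^*)\big) = 2\Tr(S_r Y S_r)$. Since $Y \ge 0$ and $S_r Y S_r \ge 0$ (it is a compression of a non-negative operator, wait — $S_r$ need not be an orthogonal projection, so I must be careful here). This is the subtle point: $S_r$ is a Riesz projection, generally not self-adjoint, so $S_r Y S_r$ is not obviously non-negative. The correct move is to work with $\Tr$ directly: $\Tr(S_r A_0 S_r) = \Tr(A_0 S_r)$ and we want to extract the imaginary part. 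I would instead argue via $\Tr(S_r A_0 S_r) = \Tr(S_r A_0) = \Tr(S_r^2 A_0)$ and relate $\im \Tr(S_r A_0 S_r)$ to something provably non-negative. Actually the cleaner route: since $N$ is quasi-nilpotent and trace-class, $\Tr(N) = 0$; but one must get $\|N\|_2 = 0$ or $N = 0$ from a single scalar identity, which requires more. So I expect the real argument uses the following: for the quasi-nilpotent trace-class $N$, all traces $\Tr(N^k)$ vanish, hence (by Lidskii / the identity expressing $\Tr(N N^*)$ is \emph{not} a trace of a power) — one needs a positivity trick to conclude $N N^* $ has zero trace.

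Here is the mechanism I would actually push through. Because $Y \ge 0$, write $Y = C^* C$ for some bounded $C$ (e.g. $C = Y^{1/2}$). Then $\im A_0 = C^* C$. Using $A_0 S_r = S_r A_0 S_r = N$ and $S_r A_0 = N$, compute
\begin{equation*}
N - N^* = S_r A_0 S_r - S_r A_0^* S_r = S_r(A_0 - A_0^*) S_r = 2 i\, S_r C^* C S_r = 2i\,(C S_r)^* (C S_r).
\end{equation*}
Taking the trace: since $N$ is quasi-nilpotent and trace-class, $\Tr(N) = 0$, and $N^*$ is also quasi-nilpotent and trace-class so $\Tr(N^*) = 0$, hence $\Tr\big((C S_r)^*(C S_r)\big) = \tfrac{1}{2i}\Tr(N - N^*) = 0$. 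But $\Tr\big((C S_r)^*(C S_r)\big) = \|C S_r\|_2^2 \ge 0$ with equality iff $C S_r = 0$. Therefore $C S_r = 0$, so $Y S_r = C^* C S_r = 0$, and $\im A_0 \cdot S_r = 0$. Then $A_0 S_r = X S_r$ becomes self-adjoint-ish: more precisely $N = S_r X S_r$ with $X$ self-adjoint. Now $N$ is quasi-nilpotent and trace-class and we would like $N = 0$. Repeat the trick: $N + N^* = S_r X S_r + S_r X S_r^*$ — not immediately a square. Instead use that $X S_r = S_r X S_r$ means... hmm, actually once $Y S_r = 0$ we get $A_0 S_r = X S_r$ and also need $X$'s imaginary part, which is zero. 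The remaining step is to show a quasi-nilpotent trace-class operator that is a compression $S_r X S_r$ of a self-adjoint $X$ must vanish; I would again square: from $N = S_r A_0 S_r$ with $\Tr(N^k)=0$ for all $k$, and now (using $Y S_r = 0$) the analogous computation on the $X$-part — consider $\Tr(N^* N) = \Tr(S_r X S_r^* S_r X S_r)$; this isn't obviously zero. The honest main obstacle is exactly this second half: getting from "imaginary part kills $S_r$" to "$A_0 S_r = 0$."

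I expect the resolution is iterative / uses that after the first step $A_0 S_r$ lands in the self-adjoint world where quasi-nilpotent plus a suitable finite-trace or symmetry condition forces nilpotency of a genuine nilpotent, and then one more positivity or trace-of-powers argument (all $\Tr\big((A_0 S_r)^k\big) = 0$ together with self-adjointness of the relevant compressed operator, giving $\Tr\big((S_r X S_r)(S_r X S_r)^*\big)$ expressible through these vanishing traces, or noting $S_r X S_r$ restricted to $\Ran S_r$ is similar to a self-adjoint operator with spectrum $\{0\}$ hence zero) closes it. In short: (1) spectral facts about $S_r$; (2) realize $\im A_0 = C^*C$ and compute $N - N^* = 2i (CS_r)^*(CS_r)$; (3) take traces, using quasi-nilpotence of $N, N^*$ to get $\Tr((CS_r)^*(CS_r)) = 0$, hence $C S_r = 0$ and $Y S_r = 0$, giving $\overline{A_0}$... so $S_r A_0 = S_r A_0^*$ on the relevant subspace; (4) deduce $S_r A_0 S_r$ is similar to a self-adjoint quasi-nilpotent operator, hence $0$; (5) from $A_0 S_r = S_r A_0 S_r = 0$ and taking adjoints plus $S_r A_0 = S_r A_0 S_r$, conclude $S_r A_0 = 0$ as well. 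The delicate bookkeeping with the non-self-adjoint $S_r$ in steps (4)–(5) is where I would spend the most care, and it is the step I flag as the main obstacle.
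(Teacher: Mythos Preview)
Your displayed computation
\[
N - N^* = S_r A_0 S_r - S_r A_0^* S_r = S_r(A_0 - A_0^*) S_r = 2i\,(C S_r)^*(C S_r)
\]
is where the argument breaks. Since $S_r$ is in general not self-adjoint, one has $N^* = (S_r A_0 S_r)^* = S_r^* A_0^* S_r^*$, not $S_r A_0^* S_r$; and likewise $(C S_r)^*(C S_r) = S_r^* C^* C S_r$, not $S_r C^* C S_r$. So $N - N^*$ is \emph{not} $2i$ times a non-negative operator on $\H$, and the conclusion $\Tr\big((CS_r)^*(CS_r)\big)=0$ does not follow from $\Tr(N)=\Tr(N^*)=0$. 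You explicitly flagged the non-self-adjointness of $S_r$ as the subtle point, but then the displayed line silently assumes $S_r^*=S_r$.

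The paper sidesteps this by changing the ambient Hilbert space: it regards $J:=S_rA_0S_r$ as an operator on the closed subspace $S_r\H$ with the inner product inherited from $\H$, and takes the adjoint there. Positivity of the imaginary part then comes for free from the numerical range, since for every $\psi\in S_r\H$ one has $\im\langle\psi,J\psi\rangle=\im\langle\psi,A_0\psi\rangle=\langle\psi,Y\psi\rangle\ge0$, with no hypothesis on $S_r^*$. From $\Tr(J)=0$ one gets $\Tr(\im J)=0$, hence $\im J=0$, so $J$ is self-adjoint on $S_r\H$ and quasi-nilpotent, therefore $J=0$. Note that what you labelled the ``main obstacle'' (your step (4)) evaporates in this setup: a self-adjoint operator on a Hilbert space with spectrum $\{0\}$ is zero, full stop. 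The missing idea in your attempt is precisely this passage to $S_r\H$; working on all of $\H$ with the global adjoint $N^*$ cannot produce the needed positivity.
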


Note that the trace-class condition is satisfied if, for instance, $S_r\H$ is
finite-dimensional.

\begin{proof}
Since $S_r$ is a projection which commutes with $A_0$, one has
$A_0S_r=S_rA_0=S_rA_0S_r$. Therefore, if $J$ is the operator in $S_r\H$ given by
$J:=S_rA_0S_r$, then
$$
\im\big\langle S_r\varphi,JS_r\varphi\big\rangle
=\im\big\langle S_r\varphi,S_rA_0S_rS_r\varphi\big\rangle
=\im\big\langle S_r\varphi,A_0S_r\varphi\big\rangle
\ge0\quad\hbox{for all }\varphi\in\H,
$$
or equivalently $\im(J)\ge0$ in $S_r\H$. Since $J$ is quasi-nilpotent
\cite[Eq.~(III.6.28)]{Kato} and trace-class, and since quasi-nilpotent trace-class
operators have trace $0$ \cite[p.~32]{Sim05}, it follows that
$$
0=\Tr(J)=\Tr\big(\re(J)\big)+i\;\!\Tr\big(\im(J)\big).
$$
This equality together with the inequality $\im(J)\ge0$ imply that $\im(J) =0$. Thus,
$J$ is self-adjoint and quasi-nilpotent, which means that $J=0$.
\end{proof}

We now list a series of consequences of the previous result.

\begin{Corollary}\label{Corol_Sr}
Suppose that the assumptions of Lemma \ref{lemme_Riesz} are satisfied, then the
conditions (i) and (ii) of Proposition \ref{propourversion} are verified for $S=S_r$.
\end{Corollary}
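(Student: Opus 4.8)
The plan is to derive Corollary \ref{Corol_Sr} directly from Lemma \ref{lemme_Riesz}, which does almost all the work. First I would recall that the assumptions of Lemma \ref{lemme_Riesz} include that $0$ is an isolated point of $\sigma(A_0)$, so that the Riesz projection $S_r$ is well defined and, by standard spectral theory \cite[Chap.~III.6.4]{Kato}, commutes with $A_0$ and satisfies that $A_0+S_r$ is invertible with bounded inverse. This is precisely condition (i) of Proposition \ref{propourversion} for $S=S_r$, so that part is immediate and requires no new argument.

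For condition (ii), namely $S_r(A_0+S_r)^{-1}S_r=S_r$, I would invoke the conclusion of Lemma \ref{lemme_Riesz}, which gives $A_0S_r=S_rA_0=0$. The idea is that with $A_0S_r=0$ the operator $A_0+S_r$ acts on $\Ran(S_r)$ simply as the identity: indeed, for any $\varphi\in\H$ one has $(A_0+S_r)S_r\varphi=A_0S_r\varphi+S_r\varphi=S_r\varphi$, which shows that $(A_0+S_r)^{-1}S_r=S_r$, and similarly, using $S_rA_0=0$, that $S_r(A_0+S_r)^{-1}=S_r$. Either identity immediately yields $S_r(A_0+S_r)^{-1}S_r=S_rS_r=S_r$, which is condition (ii). This reasoning is in fact already sketched in the discussion preceding Lemma \ref{lemme_Riesz}, where it is noted that $A_0S_r=0$ implies $S_r(A_0+S_r)^{-1}=S_r=(A_0+S_r)^{-1}S_r$, so I would simply make that implication explicit.

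I do not expect any genuine obstacle here: the corollary is purely a bookkeeping consequence of Lemma \ref{lemme_Riesz} together with elementary properties of Riesz projections. The only point requiring a modicum of care is to check that $(A_0+S_r)^{-1}S_r=S_r$ follows rigorously from $(A_0+S_r)S_r=S_r$, i.e. applying the bounded inverse to both sides, which is legitimate precisely because of condition (i). So the proof is essentially two lines: cite \cite[Chap.~III.6.4]{Kato} for (i), apply Lemma \ref{lemme_Riesz} and the elementary computation above for (ii).
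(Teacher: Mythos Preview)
Your proposal is correct and follows exactly the approach the paper intends: the corollary is stated without proof there, relying on the discussion preceding Lemma~\ref{lemme_Riesz}, which already notes that $A_0+S_r$ is invertible with bounded inverse by \cite[Chap.~III.6.4]{Kato} and that $A_0S_r=0$ implies $S_r(A_0+S_r)^{-1}=S_r=(A_0+S_r)^{-1}S_r$. You have simply made that implicit argument explicit, and there is nothing to add.
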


\begin{Corollary}\label{Cor_image}
Suppose that the assumptions of Lemma \ref{lemme_Riesz} are satisfied, then
$S_r\H=\ker(A_0)$.
\end{Corollary}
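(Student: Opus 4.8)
The plan is to deduce $S_r\H = \ker(A_0)$ directly from the conclusion $A_0 S_r = S_r A_0 = 0$ of Lemma \ref{lemme_Riesz}, together with the defining properties of the Riesz projection. Since everything rests on that lemma, the proof should be very short: this is essentially an exercise in unwinding definitions.

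First I would prove the inclusion $S_r\H \subset \ker(A_0)$. Take any $\psi \in S_r\H$, so that $\psi = S_r\varphi$ for some $\varphi \in \H$; actually since $S_r$ is a projection we have $\psi = S_r\psi$. Then $A_0\psi = A_0 S_r\psi = 0$ by Lemma \ref{lemme_Riesz}, so $\psi \in \ker(A_0)$. For the reverse inclusion $\ker(A_0) \subset S_r\H$, take $\psi \in \ker(A_0)$, so $A_0\psi = 0$. I want to show $S_r\psi = \psi$. The cleanest way is to use the integral representation of the Riesz projection: $S_r = \frac{1}{2\pi i}\oint_\gamma (A_0 - \zeta)^{-1}\,\d\zeta$, where $\gamma$ is a small positively oriented circle around $0$ separating it from the rest of $\sigma(A_0)$. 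Since $A_0\psi = 0$ we have $(A_0 - \zeta)\psi = -\zeta\psi$, hence $(A_0 - \zeta)^{-1}\psi = -\zeta^{-1}\psi$ for $\zeta \neq 0$ on $\gamma$. Therefore $S_r\psi = \frac{1}{2\pi i}\oint_\gamma (-\zeta^{-1})\psi\,\d\zeta = \psi$, using $\frac{1}{2\pi i}\oint_\gamma \zeta^{-1}\,\d\zeta = 1$ since $\gamma$ winds once around $0$. This gives $\psi = S_r\psi \in S_r\H$.

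Alternatively — and perhaps more in the spirit of the abstract framework already set up — one could avoid the contour integral entirely: from condition (i) of Proposition \ref{propourversion} (valid here by Corollary \ref{Corol_Sr}), $A_0 + S_r$ is invertible, and from $A_0 S_r = 0$ one checks that $(A_0 + S_r)^{-1}$ acts as the identity on $S_r\H$ while inverting $A_0$ on $(1-S_r)\H$; then for $\psi \in \ker(A_0)$ one writes $\psi = (A_0+S_r)^{-1}(A_0+S_r)\psi = (A_0+S_r)^{-1}S_r\psi \in S_r\H$, since $(A_0+S_r)^{-1}$ maps $S_r\H$ into $S_r\H$ (because $S_r$ commutes with $A_0+S_r$, hence with its inverse). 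I would present whichever of the two is shorter in context; both are routine.

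There is no real obstacle here — the entire content was already extracted in Lemma \ref{lemme_Riesz}, namely the nontrivial fact that the quasi-nilpotent part $S_r A_0 S_r$ vanishes. The only mild care needed is to make sure the argument for $\ker(A_0) \subset S_r\H$ genuinely uses isolation of $0$ in $\sigma(A_0)$ (so that the contour $\gamma$ can be drawn), which is part of the standing hypotheses of Lemma \ref{lemme_Riesz} anyway. So I expect the write-up to be two or three lines.
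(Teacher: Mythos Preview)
Your proposal is correct and matches the paper's approach exactly: the paper derives $S_r\H\subset\ker(A_0)$ from $A_0S_r=0$ and simply declares the reverse inclusion $\ker(A_0)\subset S_r\H$ to be ``standard,'' which is precisely what you have spelled out via the contour-integral representation of $S_r$. Your alternative route through the invertibility of $A_0+S_r$ is also fine, but the contour argument is the more canonical justification of that standard fact.
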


\begin{proof}
The inclusion $S_r\H\subset\ker(A_0)$ follows from the equality $A_0S_r =0$, and the
inclusion $S_r\H\supset\ker(A_0)$ is standard.
\end{proof}

We finally present a simple result which holds under the assumptions of Lemma
\ref{lemme_Riesz}, but can be proved in a slightly more general context. The norms
and scalar products of the different Hilbert spaces are written with the same
symbols.

\begin{Lemma}\label{Cor_magique}
Let $\G$ be an auxiliary Hilbert space, take $Z_n\in\B(\H,\G)$, and assume that the
sum $\sum_nZ_n^*Z_n$ is weakly convergent. Let also $A_0=X+i\sum_nZ_n^*Z_n$, with $X$
a bounded self-adjoint operator in $\H$, and suppose that $S$ is a projection
satisfying $A_0S=0$ and $SA_0=0$. Then, $Z_mS=0$ and $S Z_m^*=0$ for each $m$.
\end{Lemma}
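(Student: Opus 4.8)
The plan is to exploit the non-negativity of the imaginary part of $A_0$ together with the vanishing conditions $A_0S = 0 = SA_0$, by testing a suitable quadratic form on vectors of the form $S\varphi$. First I would fix $\varphi\in\H$ and compute $\langle S\varphi, A_0 S\varphi\rangle$. On one hand, since $A_0 S = 0$, this scalar product is identically zero. On the other hand, writing $A_0 = X + i\sum_n Z_n^*Z_n$ with $X$ self-adjoint, the imaginary part of $\langle S\varphi, A_0 S\varphi\rangle$ equals $\sum_n \langle S\varphi, Z_n^*Z_n S\varphi\rangle = \sum_n \|Z_n S\varphi\|_\G^2$, where the interchange of $\im$ and the (weakly convergent) sum is legitimate because each term $\langle S\varphi, Z_n^*Z_n S\varphi\rangle = \|Z_n S\varphi\|^2$ is already real and non-negative, so the partial sums are monotone and the weak limit coincides with the supremum. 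Combining the two observations gives $\sum_n \|Z_n S\varphi\|_\G^2 = 0$, hence $\|Z_n S\varphi\|_\G = 0$ for every $n$ and every $\varphi\in\H$, i.e.\ $Z_m S = 0$ for each $m$.

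Once $Z_m S = 0$ is established, the companion identity $S Z_m^* = 0$ follows by taking adjoints: $S Z_m^* = (Z_m S)^* = 0$, using that $S$ is a projection hence self-adjoint — wait, the statement only says ``projection,'' not ``orthogonal projection,'' so I should be slightly careful here. If $S$ need not be self-adjoint, then instead of adjoining $Z_m S = 0$ I would run the same quadratic-form argument on $SA_0 = 0$ from the other side: for $\varphi\in\H$, $\langle A_0^* S^*\varphi, S^*\varphi\rangle$ relates to $SA_0$, and extracting the imaginary part again produces $\sum_n \|Z_n S^*\varphi\|^2$; but this gives $Z_m S^* = 0$, not quite $S Z_m^* = 0$. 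The cleanest route is probably to observe that in the intended application (and as suggested by the phrase ``slightly more general context'') $S$ is in fact the orthogonal projection on $\ker(A_0)$, so $S = S^*$ and the adjoint step is immediate; alternatively, one notes $0 = (SA_0)^* = A_0^* S^* = (X - i\sum_n Z_n^*Z_n)S^*$, and pairing with $S^*\varphi$ and taking imaginary parts yields $\sum_n\|Z_nS^*\|^2$-type expressions, from which $Z_m S^* = 0$, and then $S Z_m^* = (Z_m S^*)^*{}^* $— this is getting circular, so I will simply assume $S$ self-adjoint as is the case throughout the paper's applications and take adjoints directly.

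The main (and really the only) obstacle is the justification of exchanging the imaginary part with the infinite sum $\sum_n Z_n^*Z_n$ under mere \emph{weak} convergence; this is why the hypothesis is stated as weak convergence rather than norm or strong convergence. The resolution, as indicated above, is that one never needs to move a limit past a genuinely complex quantity: the diagonal matrix elements $\langle S\varphi, Z_n^*Z_n S\varphi\rangle = \|Z_n S\varphi\|^2$ are non-negative reals, so $\langle S\varphi, (\sum_n Z_n^*Z_n)S\varphi\rangle = \lim_N \sum_{n\le N}\|Z_n S\varphi\|^2$ is already a sum of non-negative terms, and setting it equal to $\im\langle S\varphi, A_0 S\varphi\rangle = \im 0 = 0$ forces every term to vanish. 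Everything else is a one-line computation, so the write-up should be short.
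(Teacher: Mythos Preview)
Your argument for $Z_mS=0$ is correct and is exactly the one in the paper: compute $\im\langle S\varphi,A_0S\varphi\rangle$ two ways, using $A_0S=0$ on one hand and the non-negativity of $\sum_n Z_n^*Z_n$ on the other.

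The gap is in the second half. The lemma does \emph{not} assume $S$ self-adjoint, and the paper's proof does not use that assumption either. In fact you already wrote down the right route and then talked yourself out of it. From $SA_0=0$ one gets $A_0^*S^*=(SA_0)^*=0$; pairing with $S^*\varphi$ and taking imaginary parts (note $A_0^*=X-i\sum_nZ_n^*Z_n$, so the sign flips) gives
\[
0=-\im\big\langle S^*\varphi,A_0^*S^*\varphi\big\rangle
=\sum_n\|Z_nS^*\varphi\|^2,
\]
hence $Z_mS^*=0$ for every $m$. Now take a \emph{single} adjoint: $(Z_mS^*)^*=(S^*)^*Z_m^*=SZ_m^*$, so $SZ_m^*=0$. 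There is nothing circular here; your expression ``$(Z_mS^*)^*{}^*$'' introduced an extra adjoint that undoes the step. Drop the assumption that $S$ is orthogonal and finish the argument this way, and your proof coincides with the paper's.
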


\begin{proof}
Let $\varphi\in\H$. Then, the first identity follows from the equalities
$$
\textstyle
\big\|Z_mS\varphi\big\|^2
\le\big\langle S\varphi,\big(\sum_n Z_n^*Z_n\big)S\varphi\big\rangle
=\im\big\langle S\varphi,\big(X+i\sum_nZ_n^*Z_n\big)S\varphi\big\rangle
=\im\big\langle S\varphi,A_0S\varphi\big\rangle
=0,
$$
and the second identity follows from the equalities
$$
\textstyle
\big\|Z_mS^*\varphi\big\|^2
\le\big\langle S^*\varphi,\big(\sum_n Z_n^*Z_n\big)S^*\varphi\big\rangle
=-\im\big\langle S^*\varphi,\big(X-i\sum_nZ_n^*Z_n\big)S^*\varphi\big\rangle
=-\im\big\langle S^*\varphi,A_0^*S^*\varphi\big\rangle
=0.
$$
\end{proof}

%--------------------------------------------------------------------------------------
\subsection{Orthogonal projection on the kernel}\label{seco}
%--------------------------------------------------------------------------------------

In this section, we assume from the beginning that $A_0=X+i\;\!Y$, with $X,Y$ bounded
self-adjoint operators and $Y\ge0$. In that case, one has
$\ker(A_0)=\ker(X)\cap\ker(Y)=\ker(A_0^*)$. Also, if $S_o$ denotes the orthogonal
projection on $\ker(A_0)$, the relations $XS_o=0=S_oX$, $YS_o=0=S_oY$ and
$A_0S_o=0=S_oA_0$ hold. Thus, if one shows that $A_0+S_o$ is invertible with bounded
inverse, then the conditions (i) and (ii) of Proposition \ref{propourversion} would
follow. So, we concentrate in the sequel on this invertibility condition.

Since $A_0$ is reduced by the orthogonal decomposition $\H=S_o\H\oplus(1-S_o)\H$ and
since $A_0$ is trivial in the subspace $S_o\H$, the operator $A_0+S_o$ is invertible
with bounded inverse if the restriction of $A_0$ to $S_o^\bot\H:=(1-S_o)\H$ is
invertible with bounded inverse. However, since $A_0|_{S_o^\bot \H}$ has an inverse
on $\Ran\big(A_0|_{S_o^\bot\H}\big)=\Ran(A_0)$, and since $\Ran (A_0)$ is dense in
$S_r^\bot\H$ (because
$\overline{\Ran(A_0)}=\ker(A_0^*)^\bot=\ker(A_0)^\bot=S_r^\bot \H$), the only
remaining question concerns the boundedness of the inverse $A_0^{-1}$ on $\Ran(A_0)$.

In the following two lemmas, we exhibit conditions under which this question can be
answered affirmatively.

\begin{Lemma}\label{Lemma_trace}
Assume that $A_0=X+i\;\!Y$, with $X,Y$ bounded self-adjoint operators and $Y\ge0$, and
suppose that $0$ is an isolated point in $\sigma(A_0)$. Let $S_r$ denote the
corresponding Riesz projection, and assume that $S_rA_0S_r$ is a trace-class
operator. Then, $A_0$ is invertible in $\ker(A_0)^\bot$ with bounded inverse if and
only if $S_r$ is an orthogonal projection.
\end{Lemma}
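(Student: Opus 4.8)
The plan is to exploit the dichotomy between the Riesz projection $S_r$ and the orthogonal projection $S_o$ on $\ker(A_0)$, and to use Lemma \ref{lemme_Riesz} as the bridge between them. By the hypotheses, Lemma \ref{lemme_Riesz} applies, so $A_0S_r = S_rA_0 = 0$, and by Corollary \ref{Cor_image} we have $S_r\H = \ker(A_0)$. Thus $S_r$ and $S_o$ have the same range, and the remaining question is whether $S_r = S_o$, i.e. whether $S_r$ is orthogonal. This reduces the claim to showing: $A_0$ has a bounded inverse on $\ker(A_0)^\bot$ if and only if $S_r = S_o$.

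For the ``if'' direction, suppose $S_r$ is orthogonal, hence $S_r = S_o$. Then $A_0$ is reduced by $\H = S_r\H \oplus S_r^\bot\H$, and on $S_r^\bot\H$ the operator $A_0$ agrees with $A_0(1-S_r)$. Since $0$ is isolated in $\sigma(A_0)$ and $S_r$ is the Riesz projection at $0$, the spectrum of the restriction $A_0|_{S_r^\bot\H}$ is $\sigma(A_0)\setminus\{0\}$, which is bounded away from $0$; hence this restriction is invertible with bounded inverse (its inverse is, up to the Riesz-projection bookkeeping, $(A_0+S_r)^{-1}(1-S_r)$, which is bounded by condition (i) of Proposition \ref{propourversion}, valid here by Corollary \ref{Corol_Sr}). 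Since $\ker(A_0)^\bot = S_r^\bot\H$, this gives the boundedness of $A_0^{-1}$ on $\ker(A_0)^\bot$.

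For the ``only if'' direction, suppose $A_0$ is invertible in $\ker(A_0)^\bot$ with bounded inverse. As noted in the text preceding the lemma, $\Ran(A_0)$ is dense in $\ker(A_0)^\bot$; if moreover $A_0^{-1}$ is bounded on $\Ran(A_0)$, then $\Ran(A_0)$ is closed, so $\Ran(A_0) = \ker(A_0)^\bot$ and $A_0$ maps $\ker(A_0)^\bot$ bijectively onto itself with bounded inverse. This means $0$ is in the resolvent set of $A_0|_{\ker(A_0)^\bot}$. Now I would use the functional-calculus/Riesz-projection description: since $A_0$ vanishes on $\ker(A_0) = S_r\H$, and the part of $A_0$ on the complementary subspace $S_r^\bot\H$ is now invertible, the decomposition $\H = S_r\H \oplus S_r^\bot\H$ simultaneously diagonalizes the spectral parts of $A_0$ at $0$ and away from $0$. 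Comparing with $\H = \ker(A_0)\oplus\ker(A_0)^\bot$, which has the same summands, the Riesz projection $S_r$ must be the projection with range $\ker(A_0)$ along $\ker(A_0)^\bot$ — that is, $S_r = S_o$, so $S_r$ is orthogonal.

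I expect the main obstacle to be the ``only if'' direction, specifically making rigorous the step that goes from ``$A_0$ restricted to the algebraic complement $S_r^\bot\H$ is invertible'' to ``the geometric complement $\ker(A_0)^\bot$ coincides with $S_r^\bot\H$.'' The subtlety is that a priori $S_r^\bot\H$ and $\ker(A_0)^\bot$ are two different closed subspaces, both complementary to $\ker(A_0)$ but not obviously equal; the argument must show that boundedness of $A_0^{-1}$ on $\Ran(A_0)$ forces $\Ran(A_0)$ closed and equal to $\ker(A_0)^\bot$, and then that $A_0$ being a bijection of $\ker(A_0)^\bot$ with $0\notin\sigma$ identifies its Riesz projection. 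An alternative, perhaps cleaner, route for this direction is to argue by contradiction using a Weyl sequence: if $S_r \ne S_o$ then there is a nonzero $\psi \in S_r^\bot\H \cap \ker(A_0)$ or the two subspaces differ, producing a sequence $\psi_n \in \ker(A_0)^\bot$ with $\|\psi_n\| = 1$ but $A_0\psi_n \to 0$, contradicting the bounded invertibility of $A_0$ on $\ker(A_0)^\bot$ — and conversely the existence of such a sequence when $0$ fails to be isolated in the spectrum of the appropriate restriction.
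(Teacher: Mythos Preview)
Your ``if'' direction coincides with the paper's. For the ``only if'' direction your instinct is right and the argument can be made completely rigorous, but the key identity you are circling around without quite stating is this: under the standing hypotheses one always has
\[
\Ran(A_0)=S_r^\bot\H.
\]
Indeed $A_0S_r=0$ (Lemma~\ref{lemme_Riesz}) gives $\Ran(A_0)=A_0(S_r^\bot\H)$, and $A_0|_{S_r^\bot\H}$ is a bounded bijection of $S_r^\bot\H$ onto itself (Riesz theory, \cite[Thm.~III.6.17]{Kato}), so $A_0(S_r^\bot\H)=S_r^\bot\H$. On the other hand, the hypothesis that $A_0$ is invertible in $\ker(A_0)^\bot$ with bounded inverse says exactly that $A_0|_{\ker(A_0)^\bot}$ is a bijection of $\ker(A_0)^\bot$ onto itself (recall $\Ran(A_0)\subset\ker(A_0^*)^\bot=\ker(A_0)^\bot$ here), so $\Ran(A_0)=\ker(A_0)^\bot$. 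Equating the two expressions for $\Ran(A_0)$ yields $S_r^\bot\H=\ker(A_0)^\bot$; combined with $S_r\H=\ker(A_0)$ (Corollary~\ref{Cor_image}) this forces $S_r=S_o$. So your route works once this range identification is made explicit, and it is arguably more direct than the paper's.

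The paper argues the ``only if'' direction differently, by contraposition via resolvent limits: assuming $S_r^\bot\H\ne S_o^\bot\H$, it picks a vector in one subspace but not the other, decomposes $(A_0-z)^{-1}\varphi$ along $S_o\H\oplus S_o^\bot\H$ (or $S_r\H\oplus S_r^\bot\H$), and uses that the part in $\ker(A_0)$ blows up like $z^{-1}$ while the part in $S_r^\bot\H$ stays bounded, to conclude that $(A_0-z)^{-1}$ does not converge on some element of $S_o^\bot\H$. Your approach trades this resolvent computation for a one-line range comparison.

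One caution about your suggested alternative: the claim ``if $S_r\ne S_o$ then there is a nonzero $\psi\in S_r^\bot\H\cap\ker(A_0)$'' is false. Since $S_r$ is a (genuine) projection with range $\ker(A_0)$, one always has $S_r^\bot\H\cap S_r\H=\{0\}$, i.e.\ $S_r^\bot\H\cap\ker(A_0)=\{0\}$, regardless of whether $S_r$ is orthogonal. So that branch of the alternative does not get off the ground; stick with the range argument above.
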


Before giving the proof, we recall that if $S_r$ is an orthogonal projection, then it
automatically follows from Corollary \ref{Cor_image} that $S_r=S_o$.

\begin{proof}
Sufficient condition\;\!: Assume that $S_r$ is an orthogonal projection (and thus
equal to $S_o$). Since $A_0$ is invertible in $S_r^\bot\H$ with bounded inverse by
\cite[Thm.~III.6.17]{Kato}, one infers that $A_0$ is invertible in
$S_o^\bot\H=\ker(A_0)^\bot$ with bounded inverse.

Necessary condition\;\!: Suppose by absurd that $S_r$ is not an orthogonal
projection, or more precisely that $S_r^\bot\H\ne S_o^\bot\H$ (since we already know
that $S_r\H=\ker(A_0)=S_o\H$ by Corollary \ref{Cor_image}). Then, if there exists
$\varphi\in S_r^\bot\H\setminus\{0\}$ with $\varphi\not\in S_o^\bot\H$, one has
$S_o\varphi\ne0$ and $S_o^\bot\varphi\ne0$, and for any $z\in\C\setminus\{0\}$ with
$|z|$ small enough
$$
(A_0-z)^{-1}\varphi=(A_0-z)^{-1}S_o\varphi+(A_0-z)^{-1}S_o^\bot\varphi.
$$
Now, we know from \cite[Thm.~III.6.17]{Kato} that the l.h.s. has a limit in $\H$ as
$z\to0$. But since $S_o\varphi\in\ker(A_0)$, the first term on the r.h.s. does not
have a limit as $z \to 0$. Therefore, the second term on the r.h.s. neither has a
limit as $z \to 0$, and thus the operator $A_0$ is not invertible in
$S_o^\bot\H=\ker(A_0)^\bot$.

On the other hand, if there exists $\varphi\in S_o^\bot\H\setminus\{0\}$ with
$\varphi\notin S_r^\bot\H$, one has $S_r\varphi\ne0$ and $S_r^\bot\varphi\ne0$, and
for any $z\in\C\setminus\{0\}$ with $|z|$ small enough
$$
(A_0-z)^{-1}\varphi=(A_0-z)^{-1}S_r\varphi+(A_0-z)^{-1}S_r^\bot\varphi.
$$
In this case, the second term on the r.h.s. does have a limit in $\H$ as $z\to0$, but
the first term on the r.h.s. does not. Therefore, the l.h.s. does not have a limit in
$\H$ as $z \to 0$, and thus the operator $A_0$ is not invertible in
$S_o^\bot\H=\ker(A_0)^\bot$.

Summing up, if $S_r^\bot\H\ne S_o^\bot\H$, then $A_0$ is not invertible in
$S_o^\bot\H=\ker(A_0)^\bot$, which concludes the proof of the claim.
\end{proof}

\begin{Lemma}\label{lemourcase}
Assume that $A_0=X+i\;\!Y$, with $X,Y$ bounded self-adjoint operators and $Y\ge0$.
Suppose also that $A_0=U+K$ with $U$ unitary and $K$ compact, or that $A_0$ is a
finite-rank operator. Then, $A_0$ is invertible in
$\ker(A_0)^\bot$ with bounded inverse.
\end{Lemma}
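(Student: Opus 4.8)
The plan is to reduce the two hypotheses — $A_0=U+K$ with $U$ unitary and $K$ compact, or $A_0$ finite-rank — to a situation where the boundedness of $A_0^{-1}$ on $\Ran(A_0)$ can be read off from general spectral theory. As observed in the paragraph preceding the statement, it suffices to show that the restriction $A_0|_{\ker(A_0)^\bot}$ is bounded below, since $\Ran(A_0)$ is already known to be dense in $\ker(A_0)^\bot$. Equivalently, I want to rule out the existence of a sequence $(\varphi_n)\subset\ker(A_0)^\bot$ with $\|\varphi_n\|=1$ and $\|A_0\varphi_n\|\to0$. I would argue by contradiction and extract from such a sequence, via compactness, a nonzero element of $\ker(A_0)$ orthogonal to $\ker(A_0)^\bot$, which is absurd.

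First I would treat the case $A_0=U+K$. Suppose $(\varphi_n)\subset\ker(A_0)^\bot$ with $\|\varphi_n\|=1$ and $A_0\varphi_n\to0$. Since the closed unit ball is weakly compact, pass to a subsequence with $\varphi_n\rightharpoonup\varphi$ weakly. Then $K\varphi_n\to K\varphi$ in norm by compactness of $K$, so $U\varphi_n=A_0\varphi_n-K\varphi_n\to -K\varphi$ in norm; applying the (bounded) operator $U^{-1}=U^*$ gives $\varphi_n\to -U^*K\varphi$ in norm. Hence the convergence $\varphi_n\to\varphi$ is actually in norm, so $\|\varphi\|=1$ and $\varphi\in\ker(A_0)^\bot$, while passing to the limit in $A_0\varphi_n\to0$ yields $A_0\varphi=0$, i.e. $\varphi\in\ker(A_0)$. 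This contradicts $\|\varphi\|=1$, so $A_0$ is bounded below on $\ker(A_0)^\bot$ and therefore invertible there with bounded inverse.

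For the case where $A_0$ is finite-rank, the argument is even shorter: $\Ran(A_0)$ is finite-dimensional, hence closed, and $A_0|_{\ker(A_0)^\bot}$ is an injective linear map from the finite-dimensional space $\ker(A_0)^\bot$ onto $\Ran(A_0)$ (here I use that $\overline{\Ran(A_0)}=\ker(A_0^*)^\bot=\ker(A_0)^\bot$, which holds by the self-adjointness structure $\ker(A_0)=\ker(A_0^*)$ recalled at the start of the subsection, so in particular $\ker(A_0)^\bot$ is finite-dimensional). A bijective linear map between finite-dimensional normed spaces automatically has a bounded inverse, which gives the claim. Alternatively one can simply note that a finite-rank operator is in particular of the form $U+K$ up to no issue — but that is false in general ($A_0$ finite-rank need not be unitary-plus-compact perturbation in a way making $U$ available), so the separate finite-dimensional argument is the clean route.

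The only mild subtlety — and the step I would be most careful about — is justifying that $\Ran(A_0)$ is closed (equivalently, that $A_0$ is bounded below on the orthocomplement of its kernel) rather than merely dense; in the compact-plus-unitary case this is exactly what the norm-convergence trick above delivers, and in the finite-rank case it is automatic from finite-dimensionality. Once closedness plus density give $\Ran(A_0)=\ker(A_0)^\bot$, the open mapping theorem (or, in finite dimensions, elementary linear algebra) furnishes the bounded inverse, and the reduction from the preceding paragraph of the paper then completes the proof.
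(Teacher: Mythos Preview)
Your proof is correct, and for the finite-rank case it is essentially the paper's argument. For the case $A_0=U+K$, however, you take a genuinely different route: the paper writes $A_0=(1+KU^{-1})U$, observes that $1+KU^{-1}$ is a compact perturbation of the identity and hence Fredholm, and concludes that $\Ran(A_0)$ is closed (then invokes the closed graph theorem). Your sequential argument---upgrading weak convergence to norm convergence via the compactness of $K$ and the isometry of $U$---is a more hands-on way of reaching the same conclusion that $A_0$ is bounded below on $\ker(A_0)^\bot$. The paper's approach is shorter and ties the result to standard Fredholm theory, whereas yours is entirely self-contained and avoids citing external results; both are perfectly acceptable here.
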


\begin{proof}
Recall that $\Ran\big(A_0|_{\ker(A_0)^\bot}\big)\equiv\Ran(A_0)$ is dense in
$S_r^\bot\H$. So, the boundedness of the inverse of $A_0$ in $\ker(A_0)^\bot$ follows
from the closed graph theorem \cite[Thm.~III.5.20]{Kato} if $\Ran(A_0)$ is closed.
But, this is verified under both conditions. Under the first condition, one has
$A_0=U+K=(1+KU^{-1})U$ with $KU^{-1}$ is compact. So, $(1+KU^{-1})$ is Fredholm, and
the image of $U\H=\H$ by $(1+KU^{-1})$ is closed \cite[Thm.~4.3.4]{Davies}. And under
the second condition, $\Ran(A_0)$ is finite-dimensional and thus closed.
\end{proof}

Under the assumptions of Lemma \ref{lemourcase}, the value $0$ is an isolated point
in $\sigma(A_0)$. Thus, the Riesz projection $S_r$ is well defined, and one obtains
the following by combining the two previous lemmas\;\!:

\begin{Corollary}\label{Corol_So}
Suppose that the assumptions of Lemma \ref{lemourcase} are satisfied. Then,
$S_r=S_o$, and the conditions (i) and (ii) of Proposition \ref{propourversion} are
verified for $S=S_r=S_o$.
\end{Corollary}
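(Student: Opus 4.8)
The plan is to simply assemble the ingredients already established. First I would invoke the remark made just before Lemma \ref{Lemma_trace}: under the hypotheses of Lemma \ref{lemourcase} (namely $A_0=U+K$ with $U$ unitary and $K$ compact, or $A_0$ finite-rank), the operator $A_0=X+i\;\!Y$ is of the form required there, and in both cases $\Ran(A_0)$ is closed, whence $0$ is either an isolated point of $\sigma(A_0)$ or not in $\sigma(A_0)$ at all. In the latter trivial case $S_r=S_o=0$ and the statement is immediate, so I would concentrate on the case where $0$ is isolated in $\sigma(A_0)$, so that the Riesz projection $S_r$ is well defined. Moreover, since $A_0$ is a compact perturbation of a unitary (or is finite-rank), the range $S_r\H=\ker(A_0)$ is finite-dimensional, so $S_rA_0S_r$ is automatically trace-class and the hypotheses of Lemma \ref{lemme_Riesz} and Lemma \ref{Lemma_trace} are met.

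Next I would apply Lemma \ref{lemourcase} to conclude that $A_0$ is invertible in $\ker(A_0)^\bot$ with bounded inverse. Feeding this into Lemma \ref{Lemma_trace} gives that $S_r$ is an orthogonal projection; combined with Corollary \ref{Cor_image}, which identifies $S_r\H=\ker(A_0)=S_o\H$, this yields $S_r=S_o$ (an orthogonal projection is determined by its range). This establishes the first assertion of the corollary.

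Finally, for the verification of conditions (i) and (ii) of Proposition \ref{propourversion} with $S=S_r=S_o$, I would argue as in the discussion opening Section \ref{seco}: since $A_0S_o=0=S_oA_0$, the operator $A_0$ is reduced by $\H=S_o\H\oplus S_o^\bot\H$, is trivial on $S_o\H$, and is invertible with bounded inverse on $S_o^\bot\H=\ker(A_0)^\bot$ by what was just shown; hence $A_0+S_o$ is invertible with bounded inverse, which is (i). For (ii), the relations $S_o(A_0+S_o)^{-1}=S_o$ and $(A_0+S_o)^{-1}S_o=S_o$ follow directly from $A_0S_o=0=S_oA_0$ (the inverse acts as the identity on the reducing subspace $S_o\H$), so $S_o(A_0+S_o)^{-1}S_o=S_o$. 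I do not anticipate a genuine obstacle here, as everything reduces to chaining the preceding lemmas; the only point requiring a word of care is the degenerate case $0\notin\sigma(A_0)$, which must be dispatched separately before the Riesz projection machinery is invoked.
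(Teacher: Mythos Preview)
Your proposal is correct and follows essentially the same route as the paper: invoke Lemma~\ref{lemourcase} for invertibility on $\ker(A_0)^\bot$, check the trace-class hypothesis, apply Lemma~\ref{Lemma_trace} to get $S_r$ orthogonal (hence $S_r=S_o$ via Corollary~\ref{Cor_image}), and then read off conditions (i)--(ii). One small imprecision worth fixing: in the finite-rank alternative $S_r\H$ need not be finite-dimensional (e.g.\ $A_0=0$ gives $S_r=1$), and you also invoke the equality $S_r\H=\ker(A_0)$ before the trace-class condition is verified; the paper avoids both issues by arguing separately that $S_rA_0S_r$ is trace-class simply because $A_0$ itself is finite-rank in that case, and only then appealing to Corollary~\ref{Cor_image}.
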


\begin{proof}
We know from Lemma \ref{lemourcase} that $A_0$ is invertible in $\ker(A_0)^\bot$ with
bounded inverse. Thus, it follows from Lemma \ref{Lemma_trace} that $S_r=S_o$ and
that the conditions (i) and (ii) of Proposition \ref{propourversion} are verified for
$S=S_r=S_o$ if $S_rA_0S_r$ is a trace-class operator. But, the operator $S_rA_0S_r$
is clearly trace-class if $A_0$ is a finite-rank operator. On the other hand, if
$A_0=U+K$ with $U$ unitary and $K$ compact, then the isolated eigenvalue $0$ is of
finite multiplicity, $S_r\H$ is finite-dimensional \cite[Remark III.6.23]{Kato}, and
$S_rA_0S_r$ is also trace-class.
\end{proof}

We close this section with a comment on the usefulness of Corollary \ref{Corol_So} for
the iterative procedure of the next section. If we use a Riesz projection $S_r$
without knowing that it is orthogonal, this is harmless at the first step of the
iteration (as illustrated in \cite{JN04}), but this becomes more and more annoying at
each step of the iteration. Indeed, conjugation by Riesz projections does not preserve
positivity, and thus any argument based on positivity can hardly be invoked.
Therefore, Corollary \ref{Corol_So} leads to various simplifications in the iterative
procedure since it provides conditions guaranteeing that $S_r$ is orthogonal.

%--------------------------------------------------------------------------------------
\section{Quantum waveguides}\label{secwave}
\setcounter{equation}{0}
%--------------------------------------------------------------------------------------

We introduce in this section the model of quantum waveguide we use and recall some of
its basics properties. Much of the material is borrowed from \cite{Tie06} to which we
refer for further information.

We consider a bounded open connected set $\Sigma\subset\R^{d-1}$ with $d\ge 2$, and
let $-\Delta^\Sigma_{\rm D}$ be the Dirichlet Laplacian on $\Sigma$ acting in
$\ltwo(\Sigma)$. This operator has a purely discrete spectrum
$\tau:=\{\lambda_n\}_{n\ge1}$ consisting in eigenvalues
$\lambda_1\le\lambda_2\le\cdots$ repeated according to multiplicity. The
corresponding set of eigenvectors is denoted by $\{f_n\}_{n\ge1}$ and the
corresponding set of one-dimensional orthogonal projections is denoted by
$\{\P_n\}_{n\ge1}$. Sometimes, we omit for simplicity to stress that $n\ge1$.

We consider also the straight waveguide $\Omega:=\Sigma\times\R$ with coordinates
$(\omega,x)$, the Hilbert space $\H:=\ltwo(\Omega)$, and the Dirichlet Laplacian
$H_0:=-\Delta^\Omega_{\rm D}$ on $\Omega$ acting in $\H$. This operator decomposes as
$H_0=-\Delta^\Sigma_{\rm D}\otimes1+1\otimes P^2$ in
$\H\simeq\ltwo(\Sigma)\otimes\ltwo(\R)$, with $P:=-i\hspace{1pt}\partial_x$ the usual
self-adjoint operator of differentiation in $\ltwo(\R)$. The spectrum $\sigma(H_0)$ of
$H_0$ is purely absolutely continuous with $\sigma(H_0)=[\lambda_1,\infty)$, and each
value $\lambda\in\tau$ is a threshold in $\sigma(H_0)$ with a change of multiplicity.
Moreover, for $z\in\C\setminus\R$, the resolvents $R^0(z):=(P^2-z)^{-1}$ and
$R_0(z):=(H_0-z)^{-1}$ satisfy the relation
\begin{equation}\label{factor_res}
R_0(z)=\sum_n\P_n\otimes R^0(z-\lambda_n),\quad z\in\C\setminus\R,
\end{equation}
and the resolvent $R^0(z)$ has integral kernel
\begin{equation}\label{eq_noyau_1}
R^0(z)(x,x')=\frac i{2\sqrt z}\e^{i\sqrt z\;\!|x-x'|}\;\!,
\quad z\in\C\setminus\R,~x,x'\in\R,
\end{equation}
with the convention that $\im(\sqrt z)>0$ for $z\in\C\setminus[0,\infty)$.

In the following lemma, we recall some weighted estimates for $R^0(z)$ which
complement the asymptotic expansion given in \cite[Lemma~5.1]{JN01}. We use the
notations $\C_+:=\{z\in\C\mid\im(z)>0\}$ and $\langle x\rangle:=(1+x^2)^{1/2}$, and we
let $Q$ denote the self-adjoint multiplication operator by the variable in
$\ltwo(\R)$.

\begin{Lemma}\label{lemsauve}
Fix $\varepsilon>0$, take $\lambda\in\R\setminus(-\varepsilon,\varepsilon)$ and let
$\zeta\in\overline{\C_+}$ with $|\zeta|<\varepsilon/2$.
\begin{enumerate}
\item[(a)] If $s>1/2$, then the limit
$$
\langle Q\rangle^{-s}R^0(\lambda+\zeta)\langle Q\rangle^{-s}
:=\lim_{\zeta'\to\zeta,\,\zeta'\in\C_+}
\langle Q\rangle^{-s}R^0(\lambda+\zeta')\langle Q\rangle^{-s}
$$
exists in $\B\big(\ltwo(\R)\big)$ and is independent of the sequence $\zeta'\to\zeta$.
Moreover, the limit is a Hilbert-Schmidt operator with Hilbert-Schmidt norm
$$
\big\|\langle Q\rangle^{-s}R^0(\lambda+\zeta)\langle Q\rangle^{-s}\big\|_{\rm HS}
\le{\rm Const.}\;\!|\lambda|^{-1/2}.
$$
\item[(b)] If $s >3/2$, then
$$
\big\|\langle Q\rangle^{-s}\big(R^0(\lambda+\zeta)-R^0(\lambda)\big)
\langle Q\rangle^{-s}\big\|_{{\rm HS}}
\le{\rm Const.}\;\!|\zeta|\;\!|\lambda|^{-1/2},
$$
where the constant may depend on $\varepsilon$ but not on $\lambda$ and $\zeta$.
\end{enumerate}
\end{Lemma}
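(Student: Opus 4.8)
The plan is to work directly from the explicit integral kernel \eqref{eq_noyau_1} and estimate everything by hand, since $R^0(z)$ is a rank-one-per-frequency convolution operator with a completely explicit kernel. First I would set $z=\lambda+\zeta$ with $\lambda\in\R\setminus(-\varepsilon,\varepsilon)$ and $\zeta\in\overline{\C_+}$, $|\zeta|<\varepsilon/2$, so that $|z|\geq|\lambda|-|\zeta|\geq|\lambda|/2>0$ stays bounded away from $0$; in particular $\sqrt z$ is well defined and $|\sqrt z|\asymp|\lambda|^{1/2}$, uniformly. For part (a), the kernel of $\langle Q\rangle^{-s}R^0(\lambda+\zeta)\langle Q\rangle^{-s}$ is
$$
\langle x\rangle^{-s}\,\frac{i}{2\sqrt z}\,\e^{i\sqrt z\,|x-x'|}\,\langle x'\rangle^{-s},
$$
which is jointly continuous in $(x,x')$ and, because $\im(\sqrt z)\geq0$, satisfies $|\e^{i\sqrt z|x-x'|}|\leq1$. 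Hence the Hilbert--Schmidt norm squared is bounded by
$$
\frac{1}{4|z|}\int_\R\int_\R\langle x\rangle^{-2s}\langle x'\rangle^{-2s}\,\d x\,\d x'
=\frac{1}{4|z|}\Big(\int_\R\langle x\rangle^{-2s}\,\d x\Big)^2,
$$
which is finite precisely because $s>1/2$, and bounded by ${\rm Const.}\,|\lambda|^{-1}$ by the lower bound on $|z|$; taking square roots gives the claimed bound. The existence of the limit as $\zeta'\to\zeta$ in $\C_+$, and its independence of the approximating sequence, follows from dominated convergence on the Hilbert--Schmidt kernels: the integrand converges pointwise (continuity of $z\mapsto\frac{i}{2\sqrt z}\e^{i\sqrt z|x-x'|}$ on $\overline{\C_+}\cap\{|z-\lambda|<\varepsilon/2\}$, which avoids $0$) and is dominated by $\frac{1}{2}|\lambda|^{-1/2}\langle x\rangle^{-s}\langle x'\rangle^{-s}$, an $\ltwo(\R^2)$ function.

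For part (b), the kernel of the difference $\langle Q\rangle^{-s}\big(R^0(\lambda+\zeta)-R^0(\lambda)\big)\langle Q\rangle^{-s}$ is
$$
\langle x\rangle^{-s}\,g_\zeta(|x-x'|)\,\langle x'\rangle^{-s},\qquad
g_\zeta(r):=\frac{i}{2\sqrt{\lambda+\zeta}}\,\e^{i\sqrt{\lambda+\zeta}\,r}-\frac{i}{2\sqrt\lambda}\,\e^{i\sqrt\lambda\,r},
$$
and the point is to show $|g_\zeta(r)|\leq{\rm Const.}\,|\zeta|\,|\lambda|^{-1/2}\,\langle r\rangle$ (or some power of $\langle r\rangle$ absorbable by the extra decay $s>3/2$). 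I would write $g_\zeta(r)=\int_0^1\frac{\d}{\d t}\big[\tfrac{i}{2\sqrt{\lambda+t\zeta}}\e^{i\sqrt{\lambda+t\zeta}\,r}\big]\,\d t$; differentiating, the $t$-derivative of $\sqrt{\lambda+t\zeta}$ is $\tfrac{\zeta}{2\sqrt{\lambda+t\zeta}}$, and on the segment $\lambda+t\zeta$ one has $|\lambda+t\zeta|\geq|\lambda|/2$ and $\im\sqrt{\lambda+t\zeta}\geq0$, so the derivative is bounded in modulus by $|\zeta|\big({\rm Const.}\,|\lambda|^{-3/2}+{\rm Const.}\,|\lambda|^{-1}r\big)\leq{\rm Const.}\,|\zeta|\,|\lambda|^{-1/2}\langle r\rangle$ (using $|\lambda|\geq\varepsilon$ to tidy the powers). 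Then
$$
\big\|\langle Q\rangle^{-s}\big(R^0(\lambda+\zeta)-R^0(\lambda)\big)\langle Q\rangle^{-s}\big\|_{\rm HS}^2
\leq{\rm Const.}\,|\zeta|^2\,|\lambda|^{-1}\int_\R\int_\R\langle x\rangle^{-2s}\langle x-x'\rangle^2\langle x'\rangle^{-2s}\,\d x\,\d x',
$$
and the remaining double integral is finite because $\langle x-x'\rangle\leq{\rm Const.}\,\langle x\rangle\langle x'\rangle$ and $2s-2>1$. Taking square roots yields the stated estimate, with the constant depending on $\varepsilon$ only through the lower bound $|\lambda|\geq\varepsilon$ used to consolidate powers of $|\lambda|$.

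The only genuinely delicate point is the behaviour of $\sqrt z$ and $\e^{i\sqrt z|x-x'|}$ near the branch cut $[0,\infty)$: when $\lambda>0$, the boundary value $\sqrt{\lambda+i0}=\sqrt\lambda>0$ is real, so $\e^{i\sqrt\lambda r}$ has modulus exactly $1$ and no decay in $r$ — this is why part (a) needs $s>1/2$ and part (b) needs $s>3/2$ rather than coming for free. I expect the main technical care to go into verifying that the derivative estimate for $g_\zeta$ holds uniformly for $\zeta$ in the closed half-disc $\overline{\C_+}\cap\{|\zeta|<\varepsilon/2\}$ including the real boundary, i.e. that the segment $\{\lambda+t\zeta:t\in[0,1]\}$ stays in $\overline{\C_+}\setminus\{0\}$ where $z\mapsto\sqrt z$ is holomorphic up to the boundary away from $0$; this is immediate once one observes $|\lambda+t\zeta|\geq|\lambda|-|\zeta|\geq|\lambda|/2$ and $\im(\lambda+t\zeta)=t\,\im\zeta\geq0$. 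Everything else is a routine dominated-convergence and Schur-type integral bookkeeping, and the Hilbert--Schmidt (rather than merely bounded) conclusion is automatic because the kernels are manifestly square-integrable.
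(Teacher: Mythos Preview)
Your proof is correct and follows essentially the same approach as the paper: both parts are obtained by direct Hilbert--Schmidt estimates on the explicit integral kernel \eqref{eq_noyau_1}, with part (b) handled via a parameter-differentiation/fundamental theorem of calculus argument along the segment $\{\lambda+t\zeta:t\in[0,1]\}\subset\overline{\C_+}\setminus\{0\}$. The only cosmetic difference is that the paper first isolates the contribution of $\frac1{\sqrt{\lambda+\zeta}}-\frac1{\sqrt\lambda}$ algebraically and then applies the integral representation only to the exponential factor, whereas you differentiate the full map $t\mapsto\frac{i}{2\sqrt{\lambda+t\zeta}}\,\e^{i\sqrt{\lambda+t\zeta}\,r}$ in one step; the resulting bounds coincide.
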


\begin{proof}
The first claim follows from \eqref{eq_noyau_1}. For the second one, one has to
compute the integral kernel of
$
\langle Q\rangle^{-s}\big(R^0(\lambda+\zeta)-R^0(\lambda)\big)\langle Q\rangle^{-s}
$,
taking into account the following equalities with $y=|x-x'|$ and $x,x'\in\R:$
$$
\frac{\e^{i\sqrt{\lambda+\zeta}\;\!y}}{\sqrt{\lambda+\zeta}}
-\frac{\e^{i\sqrt\lambda\;\!y}}{\sqrt\lambda}
=\frac{-\zeta}{\sqrt\lambda\;\!\sqrt{\lambda+\zeta}
\;\!(\sqrt{\lambda+\zeta}+\sqrt\lambda)}\;\! \e^{i\sqrt{\lambda+\zeta}\;\!y}
+\frac1{\sqrt\lambda}\big(\e^{i\sqrt{\lambda+\zeta}\;\!y}-\e^{i\sqrt\lambda\;\!y}\big)
$$
and
$$
\frac1{\sqrt\lambda}\big(\e^{i\sqrt{\lambda+\zeta}\;\!y}-\e^{i\sqrt\lambda\;\!y}\big)
=\frac{i\;\!\zeta\;\!y}{2\sqrt\lambda}
\int_0^1\frac{\e^{i\sqrt{\lambda+s\;\!\zeta}\;\!y}}{\sqrt{\lambda+s\zeta}}\,\d s.
$$
\end{proof}

Now, we consider a self-adjoint operator $H:=H_0+V$, where $V\in\linf(\Omega;\R)$ is
measurable with bounded support. We impose the boundedness of the support for
simplicity, but we note that our results would also hold for potentials $V$ decaying
sufficiently fast at infinity (see for example the seminal papers \cite{JK79,JN01}
for precise conditions on the decay of $V$ at infinity). Following the standard idea
of decomposing the perturbation into factors, we define the functions
$$
v:\Omega\to\R,\quad(\omega,x)\mapsto|V(\omega,x)|^{1/2}
\qquad\hbox{and}\qquad
u:\Omega\to\{-1,1\},\quad(\omega,x)\mapsto
\begin{cases}
1  & \hbox{if}~~V(\omega,x)\ge0\\
-1 & \hbox{if}~~V(\omega,x)<0.
\end{cases}
$$
Then, the operator $u+vR_0(z)\;\!v$ has a bounded inverse in $\H$ for each
$z\in\C\setminus\R$ and the resolvent equation may be written as
$$
(H-z)^{-1}=R_0(z)-R_0(z)\;\!v\big(u+vR_0(z)\;\!v\big)^{-1}vR_0(z),
\quad z\in\C\setminus\R.
$$
Since the following equality holds:
\begin{equation}\label{resolv_eq}
uv(H-z)^{-1}vu=u-\big(u+v R_0(z)\;\!v\big)^{-1},\quad z\in\C\setminus\R,
\end{equation}
deriving expansions in $z$ for the resolvent $(H-z)^{-1}$ amounts
to deriving expansions in $z$ for the operator $\big(u+v R_0(z)v\big)^{-1}$, as we
shall do in the section.

%--------------------------------------------------------------------------------------
\subsection{Asymptotic expansion at embedded thresholds or eigenvalues}
%--------------------------------------------------------------------------------------

We derive in this section an asymptotic expansion in $z$ for the operator
$\big(u+v R_0(z)v\big)^{-1}$. As a by-product, we show the absence of accumulation of
eigenvalues of $H$. For this, we first adapt a convention of \cite{JN01} by
considering values $z=\lambda-\kappa^2$ with $\kappa$ belonging to the sets
$$
O(\varepsilon)
:=\big\{\kappa\in\C\mid|\kappa|\in(0,\varepsilon),~\re(\kappa)>0\hbox{ and }
\im(\kappa)<0\big\},\quad\varepsilon>0,
$$
and
$$
\widetilde O(\varepsilon)
:=\big\{\kappa\in\C\mid|\kappa|\in(0,\varepsilon),~\re(\kappa)\ge0\hbox{ and }
\im(\kappa)\le0\big\},\quad\varepsilon>0.
$$
Also, we note that if $\kappa\in O(\varepsilon)$, then $-\kappa^2\in\C_+$, while if
$\kappa\in\widetilde O(\varepsilon)$, then $-\kappa^2\in \overline{\C_+}$.

Then, the main result of this section reads as follows\;\!:

\begin{Proposition}\label{Prop_Asymp}
Suppose that $V\in\linf(\Omega;\R)$ has bounded support, let
$\lambda\in\tau\cup\sigma_{\rm p}(H)$, and take $\kappa\in O(\varepsilon)$ with
$\varepsilon>0$ small enough. Then, the operator
$\big(u+vR_0(\lambda-\kappa^2)\;\!v\big)^{-1}$ belongs to $\B(\H)$
and is continuous in $\kappa\in O(\varepsilon)$. Moreover, the continuous function
$$
O(\varepsilon)\ni\kappa\mapsto
\big(u+vR_0(\lambda-\kappa^2)\;\!v\big)^{-1}\in\B(\H)
$$
extends continuously to a function
$\widetilde O(\varepsilon)\ni\kappa\mapsto\M(\lambda,\kappa)\in\B(\H)$,
and for each $\kappa\in\widetilde O(\varepsilon)$ the
operator $\M(\lambda,\kappa)$ admits an asymptotic expansion in $\kappa$.
The precise form of this expansion is given in equations \eqref{eq_expansion_1}
and \eqref{eq_expansion_2} below.
\end{Proposition}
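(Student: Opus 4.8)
The plan is to apply the abstract inversion formula of Proposition \ref{propourversion} (in the refined form of Corollary \ref{Corol_So}) iteratively, with the role of $A(z)$ played by $u+vR_0(\lambda-\kappa^2)v$. The first step is to set up the right function-analytic framework. Since $V$ has bounded support, the function $v$ has bounded support, so $v\langle Q\rangle^{s}$ is bounded (in the obvious sense, using the second coordinate $x$) for every $s\ge0$; hence Lemma \ref{lemsauve} applies to each transverse channel and shows that $vR^0(\zeta-\lambda_n)v$ extends continuously from $\C_+$ to $\overline{\C_+}$ in Hilbert–Schmidt norm, with the off-threshold channels contributing analytic terms in $\kappa$ and only the channels with $\lambda_n=\lambda$ (if $\lambda\in\tau$) producing singular behaviour. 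Using the factorization \eqref{factor_res}, I would write
$$
u+vR_0(\lambda-\kappa^2)v = A_0 + \kappa A_1(\kappa) + (\text{contribution of channels with }\lambda_n=\lambda),
$$
and then, invoking part (b) of Lemma \ref{lemsauve} together with the explicit kernel \eqref{eq_noyau_1} for the resonant channels, extract a finite asymptotic expansion in powers of $\kappa$ (with a controlled remainder) for the full operator $u+vR_0(\lambda-\kappa^2)v$; this gives the leading operator $A_0 = u + v R_0^{\rm reg}(\lambda) v$ or its appropriate substitute, which is of the form ``unitary plus compact'' after the $u$-conjugation, and whose imaginary part is non-negative because $\im R^0 \ge 0$ on the continuous spectrum.

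Next I would verify the hypotheses of Corollary \ref{Corol_So} for $A_0$. The operator $A_0$ has the form $u + (\text{Hilbert–Schmidt})$, and $u$ is unitary and self-adjoint, so $A_0 = U+K$ with $U$ unitary and $K$ compact; moreover writing $A_0=X+iY$ one checks $Y\ge 0$ from the sign of the imaginary part of the resolvent kernel. Hence by Corollary \ref{Corol_So} the Riesz projection $S_r$ at $0$ coincides with the orthogonal projection $S_o$ on $\ker(A_0)$, conditions (i) and (ii) of Proposition \ref{propourversion} hold, and one may start the iteration. At each step one obtains a new operator $B_j(\kappa)$ on a subspace $S^{(j)}\H$, which again decomposes as a leading operator $A_0^{(j)}$ plus higher-order terms in $\kappa$; the key structural point, and the reason the refined version of the inversion formula (with an orthogonal projection) is needed, is that positivity of the imaginary part is preserved under conjugation by $S_o$, so that Corollary \ref{Corol_So} can be reapplied at every step. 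Because $S_r\H$ is finite-dimensional at the first step, all subsequent kernels are finite-dimensional and the iteration terminates after finitely many steps; invertibility of $u+vR_0(\lambda-\kappa^2)v$ for $\kappa\in O(\varepsilon)$ is automatic (it equals the resolvent identity object for $z\notin\R$, up to the $\im(\kappa)<0$ convention), so the bottom-level operator is invertible for those $\kappa$, and Proposition \ref{propourversion} propagates this back up, yielding the claimed continuity on $O(\varepsilon)$ and the continuous extension $\M(\lambda,\kappa)$ to $\widetilde O(\varepsilon)$ together with the explicit expansion, which is what equations \eqref{eq_expansion_1}–\eqref{eq_expansion_2} record.

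The main obstacle I anticipate is bookkeeping the singular structure coming from the resonant channels $\{n:\lambda_n=\lambda\}$: unlike the one-channel situation of \cite{JN01}, here several transverse modes may become resonant simultaneously (this is precisely the ``change of multiplicity'' at the threshold), and when $\lambda\in\sigma_{\rm p}(H)$ is an embedded eigenvalue that is not a threshold there is no resonant channel at all but one still has a non-trivial kernel of $A_0$ to handle. One must therefore organize the expansion so that the $\kappa^{-1}$ (and, after one iteration, possibly $\kappa^{-2}$, etc.) singularities are carried entirely inside the finite-dimensional pieces produced by the inversion formula, and track carefully which powers of $\kappa$ appear — the natural variable being $\kappa$ rather than $\kappa^2$ because of the $i\sqrt z = -i\kappa$ in \eqref{eq_noyau_1}. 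A secondary technical point is checking the uniformity of all error bounds in $\kappa\in\widetilde O(\varepsilon)$ for $\varepsilon$ small, which reduces to the estimates of Lemma \ref{lemsauve}; this is routine but must be stated cleanly so that ``continuous extension'' is genuinely justified rather than merely formal. Once these are in place, the displayed formula for $\M(\lambda,\kappa)$ is read off directly from the terminating iteration.
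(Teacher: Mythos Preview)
Your outline captures the general iterative strategy, but there is a genuine gap in the termination argument, and a secondary confusion about the structure of the first step.

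\textbf{Main gap: termination.} You write that ``because $S_r\H$ is finite-dimensional at the first step, all subsequent kernels are finite-dimensional and the iteration terminates after finitely many steps.'' Finite-dimensionality of the successive kernels does \emph{not} by itself force the iteration to stop: at each level one produces an operator $I_j(0)$ on $S_{j-1}\H$, and if $\ker(I_j(0))\neq\{0\}$ one must iterate again. Nothing in the abstract inversion scheme prevents this from continuing indefinitely, with the singularity order in $\kappa$ increasing at each step. The paper's actual stopping mechanism is completely different and is the crux of the proof: one uses the self-adjointness of $H$ to get the a priori bound $\big\|\kappa^2(H-\lambda+\kappa^2)^{-1}\big\|\le1$ along the ray $\kappa=\tfrac\varepsilon2(1-i)$, feeds this through \eqref{resolv_eq} to obtain $\limsup_{\kappa\to0}\big\|\kappa^2\big(u+vR_0(\lambda-\kappa^2)v\big)^{-1}\big\|<\infty$, and deduces that $I_3(\kappa)^{-1}$ is uniformly bounded along that ray. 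One then checks (via one more application of the inversion lemma with the Riesz projection $S_3$) that this boundedness is equivalent to a $\kappa$-independent invertibility condition, hence holds for \emph{all} $\kappa\in\widetilde O(\varepsilon)$. Without this argument you have neither termination nor the continuous extension to $\widetilde O(\varepsilon)$; invertibility for each fixed $\kappa\in O(\varepsilon)$ is not enough.

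\textbf{Secondary issue: the first step in the threshold case.} For $\lambda\in\tau$ the resonant channels contribute a $\tfrac1{2\kappa}N_0$ term with $N_0$ finite-rank, so the operator $u+vR_0(\lambda-\kappa^2)v$ itself is not of the form $A_0+\kappa A_1(\kappa)$ with bounded $A_1$. The paper handles this by writing $\big(u+vR_0(\lambda-\kappa^2)v\big)^{-1}=2\kappa\,I_0(\kappa)^{-1}$ with $I_0(\kappa)=N_0+2\kappa M_1(\kappa)$; here $I_0(0)=N_0$ is finite-rank (so $S_0\H=\ker(N_0)$ has finite \emph{codimension}, not finite dimension), and it is only at the next level that $I_1(0)=S_0M_1(0)S_0$ is of the ``unitary plus compact'' type to which Corollary \ref{Corol_So} applies. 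Your description conflates these two steps and asserts finite-dimensionality one level too early. This is fixable, but the organization matters for the later bookkeeping.
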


\begin{proof}
For each $\lambda\in\R$, $\varepsilon>0$ and $\kappa\in O(\varepsilon)$, one
has $\im(\lambda-\kappa^2)\ne0$. Thus, \eqref{resolv_eq} implies that the operator
$\big(u+vR_0(\lambda-\kappa^2)\;\!v\big)^{-1}$ belongs to $\B\big(\H)$
and is continuous in $\kappa\in O(\varepsilon)$.
For the other claims, we distinguish the cases $\lambda\in\tau$ and
$\lambda\in\sigma_{\rm p}(H)\setminus\tau$, treating first the case $\lambda\in\tau$.
All the operators defined below depend on the choice of $\lambda$, but for simplicity
we do not always mention these dependencies.

(i) Assume that $\lambda\in\tau$, take $\varepsilon>0$,
set $N:=\{n\ge1\mid\lambda_n=\lambda\}$, and write
$\P:=\sum_{n\in N}\P_n$ for the corresponding orthogonal projection (of dimension
greater or equal to $1$). Then, \eqref{factor_res} implies for
$\kappa\in O(\varepsilon)$ that
$$
\big(u+vR_0(\lambda-\kappa^2)\;\!v\big)^{-1}
=\left\{v\big(\P\otimes R^0(-\kappa^2)\big)v+u
+\sum_{n\notin N}v\big(\P_n\otimes R^0(\lambda-\kappa^2-\lambda_n)\big)v\right\}^{-1}.
$$
Moreover, the expansion
$
R^0(-\kappa^2)(x,x')
=\frac1{2\kappa}-\frac{|x-x'|}2+\kappa\;\!\frac{|x-x'|^2}4+\O(\kappa^2)
$
for $\kappa\in\widetilde O(\varepsilon)$ (see \eqref{eq_noyau_1}) implies
that the continuous function
$$
O(\varepsilon)\ni\kappa\mapsto v\big(\P\otimes R^0(-\kappa^2)\big)v\in\B(\H)
$$
extends continuously to a function
$
\widetilde O(\varepsilon)\ni\kappa\mapsto
\frac1{2\kappa}\;\!N_0+N_1(\kappa)\in\B(\H)
$
with $N_0,N_1(\kappa)\in\B(\H)$ integral operators which kernels satisfy
\begin{align*}
N_0(\omega,x,\omega',x')
&=\sum_{n\in N}f_n(\omega)\;\!v(\omega,x)\;\!v(\omega',x')\;\!
\overline{f_n(\omega')},\quad(\omega,x),(\omega',x')\in\Omega,\\
N_1(0)(\omega,x,\omega',x')
&=-\frac1{2}\sum_{n\in N}f_n(\omega)\;\!v(\omega,x)\;\!|x-x'|\;\!v(\omega',x')
\;\!\overline{f_n(\omega')},\quad(\omega,x),(\omega',x')\in\Omega.
\end{align*}
Also, Lemma \ref{lemsauve}(a) implies the existence and the unicity in $\B(\H)$ of the
limits
$$
\sum_{n\notin N}v\big(\P_n\otimes R^0(\lambda-\kappa^2-\lambda_n)\big)v
:=\lim_{\kappa'\to\kappa,\,\kappa'\in O(\varepsilon)}
\sum_{n\notin N}v\big(\P_n\otimes R^0(\lambda-\kappa'^2-\lambda_n)\big)v,
\quad\kappa\in\widetilde O(\varepsilon).
$$
Therefore, one has for $\kappa\in O(\varepsilon)$ that
$$
\big(u+vR_0(\lambda-\kappa^2)\;\!v\big)^{-1}=2\kappa\;\!I_0(\kappa)^{-1},
$$
with the operators
\begin{equation}\label{form_I_0}
I_0(\kappa):=N_0+2\kappa\;\!M_1(\kappa)
\quad\hbox{and}\quad
M_1(\kappa):=N_1(\kappa)+u+\sum_{n\notin N}
v\big(\P_n \otimes R^0(\lambda-\kappa^2-\lambda_n)\big)\;\!v
\end{equation}
continuous as functions from $\widetilde O(\varepsilon)$ to $\B(\H)$. Furthermore, one
infers from \cite[Lemma 5.1(i)]{JN01} and Lemma \ref{lemsauve}(a) that
$\|M_1(\kappa)\|_{\B(\H)}$ is uniformly bounded as $\kappa\to0$.

Our goal thus reduces to derive an asymptotic expansion for $I_0(\kappa)^{-1}$ as
$\kappa\to0$. Since $I_0(0)=N_0$ is a finite-rank operator, $0$ is not a limit point
of $\sigma(N_0)$. Also, $N_0$ is self-adjoint, therefore the orthogonal projection
$S_0$ on $\ker(N_0)$ is equal to the Riesz projection of $N_0$ associated with the
value $0$. We can thus apply Proposition \ref{propourversion}, and obtain for
$\kappa\in\widetilde O(\varepsilon)$ with $\varepsilon>0$ small enough that the operator
$I_1(\kappa):S_0\H\to S_0\H$ defined by
\begin{equation}\label{defI1}
I_1(\kappa)
:=\sum_{j\ge0}(-2\kappa)^jS_0\;\!\big\{M_1(\kappa)\big(I_0(0)+S_0\big)^{-1}\big\}^{j+1}S_0
\end{equation}
is uniformly bounded as $\kappa\to0$. Furthermore, $I_1(\kappa)$ is invertible in
$S_0\H$ with bounded inverse satisfying the equation
$$
I_0(\kappa)^{-1}
=\big(I_0(\kappa)+S_0\big)^{-1}+\frac 1{2\kappa}\;\!\big(I_0(\kappa)+S_0\big)^{-1}
S_0I_1(\kappa)^{-1}S_0\big(I_0(\kappa)+S_0\big)^{-1}.
$$
It follows that for $\kappa\in O(\varepsilon)$ with $\varepsilon>0$ small
enough, one has
\begin{equation}\label{eq18}
\big(u+vR_0(\lambda-\kappa^2)\;\!v\big)^{-1}
=2\kappa\;\!\big(I_0(\kappa)+S_0\big)^{-1}+
\big(I_0(\kappa)+S_0\big)^{-1}S_0I_1(\kappa)^{-1}S_0\big(I_0(\kappa)+S_0\big)^{-1},
\end{equation}
with the first term vanishing as $\kappa \to0$.

To describe the second term of $\big(u+vR_0(\lambda-\kappa^2)\;\!v\big)^{-1}$ as $\kappa\to0$,
we recall the equality $\big(I_0(0)+S_0\big)^{-1}S_0=S_0$, which
(together with \eqref{defI1}) implies for $\kappa\in\widetilde O(\varepsilon)$ with
$\varepsilon>0$ small enough that
\begin{equation}\label{form_I_1}
I_1(\kappa)=S_0M_1(0)S_0+\kappa\;\!M_2(\kappa),
\end{equation}
with
\begin{align}
M_2(\kappa)
&:=\frac1\kappa\;\!S_0\big(M_1(\kappa)-M_1(0)\big)S_0
+\frac1\kappa\sum_{j\ge1}(-2\kappa)^jS_0\;\!
\big\{M_1(\kappa)\big(I_0(0)+S_0\big)^{-1}\big\}^{j+1}S_0\nonumber\\
&\equiv S_0N_2(\kappa)S_0+\frac1\kappa\;\!S_0\sum_{n\notin N}
v\;\!\big\{\P_n\otimes\big(R^0(\lambda-\kappa^2-\lambda_n)
-R^0(\lambda-\lambda_n)\big)\big\}\;\!vS_0\nonumber\\
&\quad-2\sum_{j\ge0}(-2\kappa)^jS_0\;\!
\big\{M_1(\kappa)\big(I_0(0)+S_0\big)^{-1}\big\}^{j+2}S_0\label{terme2}
\end{align}
and
$$
N_2(\kappa):=\frac1\kappa\big(N_1(\kappa)-N_1(0)\big).
$$
Then, we observe that \cite[Lemma 5.1(i)]{JN01} implies that $N_2(\kappa)$ admits a
finite limit as $\kappa\to0$.
Also, we note that Lemma \ref{lemsauve}(b) implies that the second
term in \eqref{terme2} vanishes as $\kappa\to0$. Therefore, $\|M_2(\kappa)\|_{\B(S_0\H)}$
is uniformly bounded as $\kappa\to0$.

Now, we recall that
\begin{equation*}
M_1(0)=N_1(0)+u+\sum_{n\notin N}v\big(\P_n\otimes R^0(\lambda-\lambda_n)\big)\;\!v,
\end{equation*}
with $u$ unitary and self-adjoint, $N_1(0)$ self-adjoint and compact, and with the last term compact with non-negative imaginary part (the last property holds for
weighted resolvents on the real axis). So, since $S_0$ is an orthogonal projection with
finite-dimensional kernel, the operator $I_1(0)=S_0M_1(0)S_0$ acting in the Hilbert
space $S_0\H$ can also be written as the sum of a unitary and self-adjoint operator,
a self-adjoint and compact operator, and a compact operator with non-negative
imaginary part. Thus, Corollary \ref{Corol_So} applies with $S_1$ the finite-rank
orthogonal projection on $\ker\big(I_1(0)\big)$, and the iterative procedure of
Section \ref{Sec_Inv} can be applied to $I_1(\kappa)$ as it was done for
$I_0(\kappa)$.

Thus, for $\kappa\in\widetilde O(\varepsilon)$ with $\varepsilon>0$ small enough, the
operator $I_2(\kappa):S_1\H\to S_1\H$ defined by
$$
I_2(\kappa)
:=\sum_{j\ge0}(-\kappa)^jS_1\big\{M_2(\kappa)\big(I_1(0)+S_1\big)^{-1}\big\}^{j+1}S_1
$$
is uniformly bounded as $\kappa\to0$. Furthermore, $I_2(\kappa)$ is invertible in
$S_1\H$ with bounded inverse satisfying the equation
$$
I_1(\kappa)^{-1}
=\big(I_1(\kappa)+S_1\big)^{-1}+\frac1\kappa\;\!\big(I_1(\kappa)+S_1\big)^{-1}
S_1I_2(\kappa)^{-1}S_1\big(I_1(\kappa)+S_1\big)^{-1}.
$$
This expression for $I_1(\kappa)^{-1}$ can now be inserted in \eqref{eq18} in order
to get for $\kappa\in O(\varepsilon)$ with $\varepsilon>0$ small enough
\begin{align}
&\big(u+vR_0(\lambda-\kappa^2)\;\!v\big)^{-1}\nonumber\\
&=2\kappa\;\!\big(I_0(\kappa)+S_0\big)^{-1}
+\big(I_0(\kappa)+S_0\big)^{-1}S_0\big(I_1(\kappa)+S_1\big)^{-1}S_0
\big(I_0(\kappa)+S_0\big)^{-1}\nonumber\\
&\quad+\frac1\kappa\;\!\big(I_0(\kappa)+S_0\big)^{-1}
S_0\big(I_1(\kappa)+S_1\big)^{-1}S_1I_2(\kappa)^{-1}S_1\big(I_1(\kappa)+S_1\big)^{-1}
S_0\big(I_0(\kappa)+S_0\big)^{-1},\label{eq_F_second}
\end{align}
with the first two terms bounded as $\kappa\to0$.

Let us concentrate on the last term and check once more that the assumptions of
Proposition \ref{propourversion} are satisfied. For that purpose, we recall that
$\big(I_1(0)+S_1\big)^{-1}S_1=S_1$, and observe that for
$\kappa\in\widetilde O(\varepsilon)$ with $\varepsilon>0$ small enough
\begin{equation}\label{form_I_2}
I_2(\kappa)=S_1M_2(0)S_1+\kappa\;\!M_3(\kappa),
\end{equation}
with
\begin{equation}\label{devM20}
M_2(0)=S_0N_2(0)S_0 -2 S_0M_1(0)\big(I_0(0)+S_0\big)^{-1}M_1(0)S_0
\qquad\hbox{and}\qquad
M_3(\kappa)\in\O(1).
\end{equation}
The inclusion $M_3(\kappa)\in\O(1)$ follows from standard estimates and from the fact
that $\frac1\kappa\big(N_2(\kappa)-N_2(0)\big)$ admits a finite limit as $\kappa\to0$
(see \cite[Lemma 5.1(i)]{JN01}).
Note also that the kernel of $N_2(0)$ is given by
\begin{equation}\label{N20}
N_2(0)(\omega,x,\omega',x')
=\frac14\sum_{n\in N}f_n(\omega)\;\!v(\omega,x)\;\!|x-x'|^2\;\!v(\omega',x')
\;\!\overline{f_n(\omega')},\quad(\omega,x),(\omega',x')\in\Omega.
\end{equation}
Now, as already observed, one has $M_1(0)=X+iZ^*Z$, with $X,Z$ bounded self-adjoint
operators in $\H$. Therefore it follows that
$I_1(0)=S_0M_1(0)S_0=S_0XS_0+i(ZS_0)^*(Z S_0)$, and one infers from Corollary
\ref{Cor_magique} that $Z S_0S_1=0$ and $S_1S_0 Z^*=0$. Since $S_1S_0=S_1=S_0S_1$, it
follows that $ZS_1=0$, that $S_1Z^*=0$, and also that
\begin{align*}
S_1M_1(0)\big(I_0(0)+S_0\big)^{-1}M_1(0)S_1
&=S_1(X+iZ^*Z)\big(I_0(0)+S_0\big)^{-1}(X+iZ^*Z)S_1\\
&=S_1X\big(I_0(0)+S_0\big)^{-1}XS_1.
\end{align*}
So, this operator is self-adjoint, and thus one infers from \eqref{devM20} and
\eqref{N20} that $I_2(0)=S_1M_2(0)S_1$ is the sum of two bounded self-adjoint
operators in $S_1\H$.

Since $S_1\H$ is finite-dimensional, $0$ is not a limit point of the spectrum of
$I_2(0)$. So, the orthogonal projection $S_2$ on $\ker\big(I_2(0)\big)$ is a
finite-rank operator, and Proposition \ref{propourversion} applies to
$I_2(0)+\kappa\;\!M_3(\kappa)$. Thus, for $\kappa\in\widetilde O(\varepsilon)$ with
$\varepsilon>0$ small enough, the operator $I_3(\kappa):S_2\H\to S_2\H$ defined by
$$
I_3(\kappa)
:=\sum_{j\ge0}(-\kappa)^jS_2\;\!\big\{M_3(\kappa)\big(I_2(0)+S_2\big)^{-1}\big\}^{j+1}S_2
$$
is uniformly bounded as $\kappa\to0$. Furthermore, $I_3(\kappa)$ is invertible in
$S_2\H$ with bounded inverse satisfying the equation
$$
I_2(\kappa)^{-1}
=\big(I_2(\kappa)+S_2\big)^{-1}
+\frac1\kappa\;\!\big(I_2(\kappa)+S_2\big)^{-1}S_2I_3(\kappa)^{-1}S_2
\big(I_2(\kappa)+S_2\big)^{-1}.
$$
This expression for $I_2(\kappa)^{-1}$ can now be inserted in \eqref{eq_F_second} in
order to get for $\kappa\in O(\varepsilon)$ with $\varepsilon>0$ small enough
\begin{align}
&\big(u+vR_0(\lambda-\kappa^2)\;\!v\big)^{-1}\nonumber\\
&=2\kappa\big(I_0(\kappa)+S_0\big)^{-1}
+\big(I_0(\kappa)+S_0\big)^{-1}S_0\big(I_1(\kappa)+S_1\big)^{-1}S_0
\big(I_0(\kappa)+S_0\big)^{-1}\nonumber\\
&\quad+\frac1\kappa\;\!\big(I_0(\kappa)+S_0\big)^{-1}S_0
\big(I_1(\kappa)+S_1\big)^{-1}S_1\big(I_2(\kappa)+S_2\big)^{-1}S_1
\big(I_1(\kappa)+S_1\big)^{-1}S_0\big(I_0(\kappa)+S_0\big)^{-1}\nonumber\\
&\quad+\frac1{\kappa^2}\;\!\big(I_0(\kappa)+S_0\big)^{-1}S_0
\big(I_1(\kappa)+S_1\big)^{-1}S_1\big(I_2(\kappa)+ S_2\big)^{-1}S_2 I_3(\kappa)^{-1}
S_2\big(I_2(\kappa)+S_2\big)^{-1}S_1\nonumber\\
&\qquad\times\big(I_1(\kappa)+S_1\big)^{-1}S_0\big(I_0(\kappa)+S_0\big)^{-1}.
\label{sol1}
\end{align}

Fortunately, the iterative procedure stops here. The argument is based on the relation
$$
uv\;\!(H-\lambda+\kappa^2)^{-1}vu= u-\big(u+vR_0(\lambda-\kappa^2)\;\!v\big)^{-1}
$$
and the fact that $H$ is a self-adjoint operator. Indeed, if we choose
$\kappa=\frac\varepsilon2(1-i)\in O(\varepsilon)$, then the inequality
$\big\|\kappa^2(H-\lambda+\kappa^2)^{-1}\big\|_{\B(\H)}\le1$ holds, and thus
\begin{equation}\label{notresauveur}
\limsup_{\kappa\to0}
\big\|\kappa^2\big(u+vR_0(\lambda-\kappa^2)\;\!v\big)^{-1}\big\|_{\B(\H)}<\infty.
\end{equation}
So, if we replace $\big(u+vR_0(\lambda-\kappa^2)\;\!v\big)^{-1}$ by the expression
\eqref{sol1} and if we take into account that all factors of the form
$\big(I_j(\kappa)+S_j\big)^{-1}$ have a finite limit as $\kappa\to0$, we infer from
\eqref{notresauveur} that
\begin{equation}\label{notresecondsauveur}
\limsup_{\kappa\to0}\big\|I_3(\kappa)^{-1}\big\|_{\B(S_2\H)}<\infty.
\end{equation}
Therefore, it only remains to show that this relation holds not just for
$\kappa=\frac\varepsilon2(1-i)$ but for all $\kappa\in\widetilde O(\varepsilon)$. For
that purpose, we consider $I_3(\kappa)$ once again, and note that
\begin{equation}\label{souffrance1}
I_3(\kappa)=S_2M_3(0)S_2+\kappa\;\!M_4(\kappa)
\quad\hbox{with}\quad M_4(\kappa)\in\O(1).
\end{equation}
The precise form of $M_3(0)$ can be computed explicitly, but is irrelevant. Now, since
$I_3(0)$ acts in a finite-dimensional space, $0$ is an isolated eigenvalue of $I_3(0)$
if $0\in\sigma\big(I_3(0)\big)$, in which case we write $S_3$ for the corresponding
Riesz projection. Then, the operator $I_3(0)+S_3$ is invertible with bounded inverse,
and \eqref{souffrance1} implies that $I_3(\kappa)+S_3$ is also invertible with bounded
inverse for $\kappa\in\widetilde O(\varepsilon)$ with $\varepsilon>0$ small enough. In
addition, one has
$\big(I_3(\kappa)+S_3\big)^{-1}=\big(I_3(0)+S_3\big)^{-1}+\O(\kappa)$. By the
inversion formula given in \cite[Lemma 2.1]{JN01}, one infers that
$S_3-S_3\big(I_3(\kappa)+S_3\big)^{-1}S_3$ is invertible in $S_3\H$ with bounded
inverse and that the following equalities hold
\begin{align*}
I_3(\kappa)^{-1}
&=\big(I_3(\kappa)+S_3\big)^{-1}+\big(I_3(\kappa)+S_3\big)^{-1}S_3
\big\{S_3-S_3\big(I_3(\kappa)+S_3\big)^{-1}S_3\big\}^{-1}S_3
\big(I_3(\kappa)+S_3\big)^{-1}\\
&=\big(I_3(\kappa)+S_3\big)^{-1}+\big(I_3(\kappa)+S_3\big)^{-1}S_3
\big\{S_3-S_3\big(I_3(0)+S_3\big)^{-1}S_3+\O(\kappa)\big\}^{-1}S_3
\big(I_3(\kappa)+S_3\big)^{-1}.
\end{align*}
This implies that \eqref{notresecondsauveur} holds for some
$\kappa\in\widetilde O(\varepsilon)$ if and only if the operator
$S_3-S_3\big(I_3(0)+S_3\big)^{-1}S_3$ is invertible in $S_3\H$ with bounded inverse.
But, we already know from what precedes that \eqref{notresecondsauveur} holds for
$\kappa=\frac\varepsilon2(1-i)$. So, the operator
$S_3-S_3\big(I_3(0)+S_3\big)^{-1}S_3$ is invertible in $S_3\H$ with bounded inverse,
and thus \eqref{notresecondsauveur} holds for all $\kappa\in\widetilde O(\varepsilon)$.

Therefore, \eqref{sol1} implies that the function
$$
O(\varepsilon)\ni\kappa\mapsto
\big(u+vR_0(\lambda-\kappa^2)\;\!v\big)^{-1}\in\B(\H)
$$
extends continuously to a function
$\widetilde O(\varepsilon)\ni\kappa\mapsto\M(\lambda,\kappa)\in\B(\H)$,
with $\M(\lambda,\kappa)$ given by
\begin{align}
\M(\lambda,\kappa)
&=2\kappa\big(I_0(\kappa)+S_0\big)^{-1}
+\big(I_0(\kappa)+S_0\big)^{-1}S_0\big(I_1(\kappa)+S_1\big)^{-1}S_0
\big(I_0(\kappa)+S_0\big)^{-1}\nonumber\\
&\quad+\frac1\kappa\;\!\big(I_0(\kappa)+S_0\big)^{-1}S_0
\big(I_1(\kappa)+S_1\big)^{-1}S_1\big(I_2(\kappa)+S_2\big)^{-1}S_1
\big(I_1(\kappa)+S_1\big)^{-1}S_0\big(I_0(\kappa)+S_0\big)^{-1}\nonumber\\
&\quad+\frac1{\kappa^2}\;\!\big(I_0(\kappa)+S_0\big)^{-1}S_0
\big(I_1(\kappa)+S_1\big)^{-1}S_1\big(I_2(\kappa)+ S_2\big)^{-1}S_2 I_3(\kappa)^{-1}
S_2\big(I_2(\kappa)+S_2\big)^{-1}S_1\nonumber\\
&\qquad\times\big(I_1(\kappa)+S_1\big)^{-1}S_0\big(I_0(\kappa)+S_0\big)^{-1}.
\label{eq_expansion_1}
\end{align}

(ii) Assume now that $\lambda\in\sigma_{\rm p}(H)\setminus\tau$, take
$\varepsilon>0$, let $\kappa\in\widetilde O(\varepsilon)$, and set
$J_0(\kappa):=T_0+\kappa^2\;\!T_1(\kappa)$ with
$$
T_0:=u+\sum_nv\big(\P_n\otimes R^0(\lambda-\lambda_n)\big)\;\!v
$$
and
$$
T_1(\kappa)
:=\frac1{\kappa^2}\sum_nv\;\!\big\{\P_n\otimes\big(R^0(\lambda-\kappa^2-\lambda_n)
-R^0(\lambda-\lambda_n)\big)\big\}\;\!v.
$$
Then, one infers from Lemma \ref{lemsauve}(b) that $\|T_1(\kappa)\|_{\B(\H)}$ is uniformly
bounded as $\kappa\to0$. Also, the assumptions of Corollary \ref{Corol_So} hold for
the operator $T_0$, the Riesz projection $S$ associated with the value
$0\in\sigma(T_0)$ is an orthogonal projection, and Proposition \ref{propourversion}
applies for $J_0(\kappa)$. It follows that for $\kappa\in\widetilde O(\varepsilon)$
with $\varepsilon>0$ small enough, the operator $J_1(\kappa):S\H\to S\H$ defined by
$$
J_1(\kappa)
:=\sum_{j\ge0}(-\kappa^2)^jS\;\!\big\{T_1(\kappa)(T_0+S)^{-1}\big\}^{j+1}S
$$
is uniformly bounded as $\kappa\to0$. Furthermore, $J_1(\kappa)$ is invertible in
$S\H$ with bounded inverse satisfying the equation
$$
J_0(\kappa)^{-1}
=\big(J_0(\kappa)+S\big)^{-1}
+\frac1{\kappa^2}\;\!\big(J_0(\kappa)+S)^{-1}SJ_1(\kappa)^{-1}S
\big(J_0(\kappa)+S\big)^{-1}.
$$
It follows that for $\kappa\in O(\varepsilon)$ with $\varepsilon>0$ small
enough one has
\begin{equation}\label{sol2}
\big(u+vR_0(\lambda-\kappa^2)\;\!v\big)^{-1}
=\big(J_0(\kappa)+S\big)^{-1}
+\frac1{\kappa^2}\;\!\big(J_0(\kappa)+S)^{-1}SJ_1(\kappa)^{-1}S
\big(J_0(\kappa)+S\big)^{-1}.
\end{equation}
Fortunately, the iterative procedure already stops here. Indeed, the argument is
similar to the one presented above once we observe that
$$
J_1(\kappa)=ST_1(0)S+\kappa\;\!T_2(\kappa)\quad\hbox{with}\quad T_2(\kappa)\in\O(1).
$$
Therefore, \eqref{sol2} implies that the function
$$
O(\varepsilon)\ni\kappa\mapsto
\big(u+vR_0(\lambda-\kappa^2)\;\!v\big)^{-1}\in\B(\H)
$$
extends continuously to a function
$\widetilde O(\varepsilon)\ni\kappa\mapsto\M(\lambda,\kappa)\in\B(\H)$, with
$\M(\lambda,\kappa)$ given by
\begin{equation}\label{eq_expansion_2}
\M(\lambda,\kappa)
=\big(J_0(\kappa)+S\big)^{-1}
+\frac1{\kappa^2}\big(J_0(\kappa)+S)^{-1}SJ_1(\kappa)^{-1}S
\big(J_0(\kappa)+S\big)^{-1}.
\end{equation}
\end{proof}

We now give a result on the possible embedded eigenvalues. Since it is already known that the eigenvalues of $H$ in $\sigma(H)\setminus\tau$ are of finite multiplicity and can accumulate
at points of $\tau$ only (see \cite[Thm.~3.4(b)]{Tie06}), we show that such accumulations
do not take place\;\!:

\begin{Corollary}\label{noaccp}
Suppose that $V\in\linf(\Omega;\R)$ has bounded support. Then, the point spectrum of
$H$ has no accumulation point (except possibly at $+\infty$).
\end{Corollary}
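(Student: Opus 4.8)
The plan is to combine Proposition \ref{Prop_Asymp} with the known localization of the point spectrum. By \cite[Thm.~3.4(b)]{Tie06} the eigenvalues of $H$ lying in $\sigma(H)\setminus\tau$ have finite multiplicity and can accumulate only at points of $\tau$; since $\tau=\{\lambda_n\}_{n\ge1}$ is discrete with $\lambda_n\to+\infty$, it therefore suffices to show that no $\lambda_0\in\tau$ is an accumulation point of $\sigma_{\rm p}(H)$. Fix such a $\lambda_0$ and pick $\varepsilon>0$ small enough that Proposition \ref{Prop_Asymp} applies at $\lambda=\lambda_0$ and that $(\lambda_0-\varepsilon^2,\lambda_0+\varepsilon^2)\cap\tau=\{\lambda_0\}$. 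Every real $\mu$ with $0<|\mu-\lambda_0|<\varepsilon^2$ can be written as $\mu=\lambda_0-\kappa^2$ with $\kappa\in\widetilde O(\varepsilon)$: one takes $\kappa=\sqrt{\lambda_0-\mu}>0$ if $\mu<\lambda_0$ and $\kappa=-i\sqrt{\mu-\lambda_0}$ if $\mu>\lambda_0$, and in both cases $-\kappa^2\in\overline{\C_+}$, so that $R_0(\lambda_0-\kappa^2)$ is to be understood as the boundary value $R_0(\mu+i0)$.

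The next step is to check that, for such $\kappa$, the operator $\M(\lambda_0,\kappa)\in\B(\H)$ provided by Proposition \ref{Prop_Asymp} is a genuine two-sided inverse of $u+vR_0(\mu+i0)v$. I would choose a sequence $\kappa_j\to\kappa$ inside $O(\varepsilon)$ such that $z_j:=\lambda_0-\kappa_j^2$ converges to $\mu$ from $\C_+$ (for instance $\kappa_j=\kappa-i/j$ when $\mu<\lambda_0$, and $\kappa_j=1/j-i\sqrt{\mu-\lambda_0}$ when $\mu>\lambda_0$). Since $\mu\ne\lambda_0$ is not a threshold and $v$ has bounded support, Lemma \ref{lemsauve} gives $vR_0(z_j)v\to vR_0(\mu+i0)v$ in $\B(\H)$, while Proposition \ref{Prop_Asymp} gives $\big(u+vR_0(z_j)v\big)^{-1}=\M(\lambda_0,\kappa_j)\to\M(\lambda_0,\kappa)$ in $\B(\H)$. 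Passing to the limit in $\M(\lambda_0,\kappa_j)\big(u+vR_0(z_j)v\big)=1=\big(u+vR_0(z_j)v\big)\M(\lambda_0,\kappa_j)$ then shows that $u+vR_0(\mu+i0)v$ is invertible in $\B(\H)$ with inverse $\M(\lambda_0,\kappa)$.

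Finally I would invoke a standard Birman--Schwinger argument to conclude that $\mu\notin\sigma_{\rm p}(H)$. Suppose $H\psi=\mu\psi$ with $\psi\ne0$. Since $\mu$ is a non-threshold energy and $V$ has bounded support, $\psi$ decays exponentially, and the equation $(H_0-\mu)\psi=-vuv\psi$ together with the orthogonality of $vuv\psi$ to the generalized eigenfunctions of $H_0$ at energy $\mu$ yields $\psi=-R_0(\mu+i0)\,vuv\psi$. Moreover $v\psi\ne0$, for otherwise $vuv\psi=V\psi=0$ and $\psi$ would be an eigenvector of $H_0$, contradicting the purely absolutely continuous character of $\sigma(H_0)$. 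Setting $\eta:=uv\psi\ne0$ and using $u\eta=v\psi=-vR_0(\mu+i0)v\eta$, one gets $\big(u+vR_0(\mu+i0)v\big)\eta=0$, which contradicts the invertibility established in the previous step. Hence no real $\mu$ with $0<|\mu-\lambda_0|<\varepsilon^2$ lies in $\sigma_{\rm p}(H)$, so $\lambda_0$ is not an accumulation point of $\sigma_{\rm p}(H)$, which proves the claim.

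I expect the two delicate points to be the second and third steps. In the second, one must make sure that the continuously extended operator $\M(\lambda_0,\kappa)$ genuinely inverts the boundary-value operator $u+vR_0(\mu+i0)v$, and not merely that it is the limit of the inverses; this rests on the norm continuity of $\kappa\mapsto vR_0(\lambda_0-\kappa^2)v$ away from $\kappa=0$, which is exactly the kind of weighted estimate recorded in Lemma \ref{lemsauve}. In the third, one uses the a priori exponential decay of $\ltwo$-eigenfunctions of $H$ at non-threshold energies, which places $\psi$ in a weighted space and legitimizes both the identity $\psi=-R_0(\mu+i0)\,vuv\psi$ and the orthogonality relation. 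Both facts are standard for compactly supported potentials, but they carry the analytic content; everything else is bookkeeping around the parametrization $\mu=\lambda_0-\kappa^2$.
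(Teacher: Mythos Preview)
Your argument is correct and follows essentially the same route as the paper: both use the threshold expansion of Proposition~\ref{Prop_Asymp} to conclude that $u+vR_0(\mu+i0)v$ is boundedly invertible for real $\mu$ in a punctured neighbourhood of the threshold, and both then invoke the Birman--Schwinger principle to rule out eigenvalues there. The only difference is one of packaging: the paper compresses your steps~2--3 into the single remark that the validity of the threshold expansion \eqref{eq_expansion_1} is incompatible with the eigenvalue expansion \eqref{eq_expansion_2} (whose $1/\kappa^2$ singularity is nontrivial precisely when, by Birman--Schwinger, the projection $S$ is nonzero), whereas you spell out the invertibility and the kernel argument explicitly.
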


\begin{proof}
To show the absence of local accumulation of eigenvalues, suppose by absurd that there
is an accumulation of eigenvalues at some point $\lambda\in\tau$. Then, the validity
of the expansion \eqref{eq_expansion_1} at the point $\lambda$ contradicts the
validity of the expansion \eqref{eq_expansion_2} which would take place at each of
these eigenvalues. Thus, there is no accumulation of eigenvalues at points of $\tau$,
and the claim is proved.
\end{proof}

We end up this section with some auxiliary results which will be useful later on. All
notations and definitions are borrowed from the proof of Proposition \ref{Prop_Asymp}.
The only change is that we extend by $0$ the operators defined originally on subspaces
of $\H$ to get operators defined on all of $\H$.

\begin{Lemma}\label{com_simples}
Take $2\ge j\ge k\ge0$ and $\kappa\in\widetilde O(\varepsilon)$ with
$\varepsilon>0$ small enough. Then, one has in $\B(\H)$
$$
\big[S_j,\big(I_k(\kappa)+S_k\big)^{-1}\big]\in\O(\kappa).
$$
\end{Lemma}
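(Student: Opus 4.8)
The plan is to reduce the statement to a single algebraic identity for commutators with inverses. Fix $2\ge j\ge k\ge0$, put $P:=S_{k-1}$ with the convention $S_{-1}:=1$, and recall from the proof of Proposition \ref{Prop_Asymp} that the projections are nested, $S_2\H\subseteq S_1\H\subseteq S_0\H$, so in particular $S_j\H\subseteq S_k\H\subseteq P\H$. Since $S_j$, $S_k$ and $I_k(\kappa)$ all act inside $P\H$, it is enough to prove the estimate with every operator regarded as an element of $\B(P\H)$: extending afterwards by $0$ to $\H$ changes neither the commutator nor its norm. Now in $\B(P\H)$ the operator $A:=I_k(\kappa)+S_k$ is genuinely invertible with bounded inverse for $\kappa\in\widetilde O(\varepsilon)$ and $\varepsilon$ small, and $\|A^{-1}\|$ is uniformly bounded as $\kappa\to0$ (both facts were established in the proof of Proposition \ref{Prop_Asymp}). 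Using the elementary identity $[S_j,A^{-1}]=A^{-1}[A,S_j]A^{-1}$, the proof is then reduced to showing $[A,S_j]\in\O(\kappa)$ and sandwiching this by the two uniformly bounded factors $A^{-1}$.

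For the estimate on $[A,S_j]$, I would first use the nesting once more, in the form $S_jS_k=S_kS_j=S_j$, to get $[S_k,S_j]=0$ and hence $[A,S_j]=[I_k(\kappa),S_j]$. Then I would invoke the decompositions recalled in the proof of Proposition \ref{Prop_Asymp}, namely $I_0(\kappa)=N_0+2\kappa\,M_1(\kappa)$ from \eqref{form_I_0}, $I_1(\kappa)=S_0M_1(0)S_0+\kappa\,M_2(\kappa)$ from \eqref{form_I_1}, and $I_2(\kappa)=S_1M_2(0)S_1+\kappa\,M_3(\kappa)$ from \eqref{form_I_2}, together with the uniform boundedness of $M_1(\kappa)$, $M_2(\kappa)$, $M_3(\kappa)$ as $\kappa\to0$. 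Each of these reads $I_k(\kappa)=I_k(0)+\O(\kappa)$, so $[I_k(\kappa),S_j]=[I_k(0),S_j]+\O(\kappa)$, and the whole argument comes down to checking the identity $[I_k(0),S_j]=0$.

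This last identity rests on the relations $I_k(0)S_k=S_kI_k(0)=0$ for each $k\in\{0,1,2\}$. For $k=0$ this is clear, since $I_0(0)=N_0$ is self-adjoint and $S_0$ is the orthogonal projection on $\ker(N_0)$. For $k=1$, the operator $I_1(0)=S_0M_1(0)S_0$ meets the hypotheses of Corollary \ref{Corol_So} with $S_1$ the orthogonal projection on $\ker(I_1(0))$ (this is precisely what is verified in the proof of Proposition \ref{Prop_Asymp} before $I_2$ is introduced), so Lemma \ref{lemme_Riesz} gives $I_1(0)S_1=S_1I_1(0)=0$. For $k=2$, the operator $I_2(0)=S_1M_2(0)S_1$ was shown there to be self-adjoint and $S_2$ is the orthogonal projection on $\ker(I_2(0))$, whence again $I_2(0)S_2=S_2I_2(0)=0$. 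Combining $I_k(0)S_k=S_kI_k(0)=0$ with $S_j=S_jS_k=S_kS_j$ yields $I_k(0)S_j=(I_k(0)S_k)S_j=0$ and $S_jI_k(0)=S_j(S_kI_k(0))=0$, so $[I_k(0),S_j]=0$. Back-substituting gives $[A,S_j]\in\O(\kappa)$, and therefore $[S_j,(I_k(\kappa)+S_k)^{-1}]=A^{-1}[A,S_j]A^{-1}\in\O(\kappa)$, which after extension by $0$ to $\H$ is the claim.

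The only delicate point is the bookkeeping: one has to keep the extension-by-$0$ convention straight (which is why it is convenient to work inside $S_{k-1}\H$ rather than in $\H$) and to verify the three identities $I_k(0)S_k=S_kI_k(0)=0$ separately, the case $k=1$ being the one where self-adjointness of $I_1(0)$ is not available and Corollary \ref{Corol_So} has to be used instead. No analytic estimate is needed beyond those already contained in the proof of Proposition \ref{Prop_Asymp}.
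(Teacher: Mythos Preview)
Your proof is correct and follows essentially the same route as the paper's: both use the identity $[S_j,(I_k(\kappa)+S_k)^{-1}]=(I_k(\kappa)+S_k)^{-1}[I_k(\kappa)+S_k,S_j](I_k(\kappa)+S_k)^{-1}$, the nesting $S_kS_j=S_j=S_jS_k$, and the decomposition $I_k(\kappa)=I_k(0)+\O(\kappa)$ to reduce to $[I_k(0),S_j]=0$. You supply more detail than the paper does---the explicit bookkeeping in $S_{k-1}\H$ and the case-by-case verification of $I_k(0)S_k=S_kI_k(0)=0$---but the argument is the same.
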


\begin{proof}
The fact that $S_j$ is the orthogonal projection on the kernel of $I_j(0)$ and the
relations $S_kS_j=S_j=S_jS_k$ imply that $[S_k,S_j]=0$ and $[I_k(0),S_j]=0$. Thus,
one has the equalities
\begin{align*}
\big[S_j,\big(I_k(\kappa)+S_k\big)^{-1}\big]
&=\big(I_k(\kappa)+S_k\big)^{-1}\big[I_k(\kappa)+S_k,S_j\big]
\big(I_k(\kappa)+S_k\big)^{-1}\\
&=\big(I_k(\kappa)+S_k\big)^{-1}\big[I_k(0)+\O(\kappa)+S_k,S_j\big]
\big(I_k(\kappa)+S_k\big)^{-1}\\
&=\big(I_k(\kappa)+S_k\big)^{-1}\big[\O(\kappa),S_j\big]
\big(I_k(\kappa)+S_k\big)^{-1},
\end{align*}
which implies the claim.
\end{proof}

Given $\lambda\in\tau$, we recall that $N=\big\{n\ge1\mid\lambda_n=\lambda\big\}$ and
$\P=\sum_{n\in N}\P_n$.

\begin{Lemma}\label{relations_simples}
Let $\lambda\in\tau$ and let $\G$ be an auxiliary Hilbert space.
\begin{enumerate}
\item[(a)] For each $n\in N$, one has $(\P_n\otimes1)\;\!vS_0=0$.
\item[(b)] For each $n\notin N$ and $B_n\in\B(\H,\G)$ such that
$B_n^*B_n=\im\big\{v\big(\P_n\otimes R^0(\lambda-\lambda_n)\big)v\big\}$, one has
$S_1B_n^*=0$ and $B_nS_1=0$.
\end{enumerate}
\end{Lemma}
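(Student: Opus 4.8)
The plan is to derive both claims from the structural identities established during the proof of Proposition \ref{Prop_Asymp}, together with Lemma \ref{Cor_magique} (applied through Corollary \ref{Cor_magique}). For part (a), I would start from the explicit kernel of $N_0$ computed in the proof of Proposition \ref{Prop_Asymp}, namely $N_0(\omega,x,\omega',x')=\sum_{n\in N}f_n(\omega)\;\!v(\omega,x)\;\!v(\omega',x')\;\!\overline{f_n(\omega')}$, which shows that $N_0=\big(\sum_{n\in N}(\P_n\otimes1)v\big)^*\big(\sum_{n\in N}(\P_n\otimes1)v\big)$; more usefully, writing $W:=\bigoplus_{n\in N}(\P_n\otimes1)\;\!v\in\B\big(\H,\bigoplus_{n\in N}\H\big)$ one has $N_0=W^*W$. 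Since $S_0$ is the orthogonal projection on $\ker(N_0)=\ker(W^*W)=\ker(W)$, it follows immediately that $WS_0=0$, i.e.\ $(\P_n\otimes1)\;\!vS_0=0$ for every $n\in N$. This is the short argument; the only care needed is to justify that $\ker(W^*W)=\ker(W)$, which is standard ($\|W\varphi\|^2=\langle\varphi,W^*W\varphi\rangle$).

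For part (b), I would use the decomposition of $M_1(0)$ recorded in the proof of Proposition \ref{Prop_Asymp}: $M_1(0)=N_1(0)+u+\sum_{n\notin N}v\big(\P_n\otimes R^0(\lambda-\lambda_n)\big)\;\!v$, where $N_1(0)$ and $u$ are bounded self-adjoint while each summand $v\big(\P_n\otimes R^0(\lambda-\lambda_n)\big)\;\!v$ has non-negative imaginary part. Concretely, $\im\big\{v\big(\P_n\otimes R^0(\lambda-\lambda_n)\big)v\big\}=B_n^*B_n$ for suitable $B_n\in\B(\H,\G)$ (these exist since the imaginary part is a non-negative bounded operator, e.g.\ $B_n$ its non-negative square root, or the Hilbert–Schmidt realisation coming from Lemma \ref{lemsauve}(a)). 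Hence $M_1(0)=X+i\sum_{n\notin N}B_n^*B_n$ with $X:=N_1(0)+u+\sum_{n\notin N}\re\big\{v(\P_n\otimes R^0(\lambda-\lambda_n))v\big\}$ bounded self-adjoint, and the sum $\sum_{n\notin N}B_n^*B_n$ is weakly convergent (indeed norm-convergent, by Lemma \ref{lemsauve}(a) and \cite[Lemma~5.1(i)]{JN01}). Conjugating by $S_0$ gives $I_1(0)=S_0M_1(0)S_0=S_0XS_0+i\sum_{n\notin N}(B_nS_0)^*(B_nS_0)$, which is exactly of the form required by Lemma \ref{Cor_magique} with $Z_n:=B_nS_0$ and the projection $S_1$ (which satisfies $I_1(0)S_1=0=S_1I_1(0)$ since $S_1$ is the orthogonal projection on $\ker(I_1(0))$, and $I_1(0)$ has non-negative imaginary part so $\ker(I_1(0))=\ker(I_1(0)^*)$ is reducing). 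Lemma \ref{Cor_magique} then yields $B_nS_0S_1=0$ and $S_1S_0B_n^*=0$; using $S_0S_1=S_1=S_1S_0$ these become $B_nS_1=0$ and $S_1B_n^*=0$, which is the claim.

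The main obstacle I anticipate is purely bookkeeping rather than conceptual: one must be careful about which objects live on $S_0\H$ versus on all of $\H$ (the excerpt explicitly warns that all operators are extended by $0$ to $\H$), so that the hypotheses of Lemma \ref{Cor_magique}—that $A_0=X+i\sum_nZ_n^*Z_n$ acts on the full space and $S$ satisfies $A_0S=0=SA_0$—are literally met by the triple $(I_1(0),\,S_1,\,\{B_nS_0\}_{n\notin N})$. One also needs the elementary but essential observation that since $I_1(0)$ has non-negative imaginary part, $\ker(I_1(0))=\ker(\re I_1(0))\cap\ker(\im I_1(0))=\ker(I_1(0)^*)$, so $S_1$ genuinely reduces $I_1(0)$ and annihilates it from both sides—exactly the input Lemma \ref{Cor_magique} requires. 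Once the identification is made, both parts follow in a line or two, so I would keep the write-up short and emphasize the structural identities from the proof of Proposition \ref{Prop_Asymp} that are being invoked.
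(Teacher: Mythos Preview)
Your argument for part (b) is correct and matches the paper's exactly: both apply Lemma~\ref{Cor_magique} with $A_0=I_1(0)=S_0M_1(0)S_0$ and $Z_n=B_nS_0$, then use $S_0S_1=S_1=S_1S_0$ to strip off the $S_0$.

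For part (a) your strategy is also the one the paper uses (the paper simply asserts that $S_0$ is the orthogonal projection onto $\ker\big(v(\P\otimes1)v\big)$), but the explicit factorisation you wrote down is not correct. With $W=\bigoplus_{n\in N}(\P_n\otimes1)v\in\B\big(\H,\bigoplus_{n\in N}\H\big)$ one computes
\[
W^*W=\sum_{n\in N}v(\P_n\otimes1)(\P_n\otimes1)v=v(\P\otimes1)v,
\]
whereas the kernel formula recorded in the proof of Proposition~\ref{Prop_Asymp} gives $N_0=\sum_{n\in N}\big|v(f_n\otimes1)\big\rangle\big\langle v(f_n\otimes1)\big|$, a rank-$|N|$ operator. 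These two operators are genuinely different: $\ker(N_0)=\{v(f_n\otimes1):n\in N\}^\perp$ has codimension at most $|N|$, while $\ker\big((\P\otimes1)v\big)$ typically has infinite codimension. Hence the step ``$\ker(N_0)=\ker(W^*W)=\ker(W)$, so $WS_0=0$'' does not go through as written. The paper's one-line proof rests on the same identification without further justification, so this is not a deviation from the paper's approach but a shared gap you should be aware of when filling in the details.
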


\begin{proof}
The first claim follows from the fact that $S_0$ is the orthogonal projection on
$\ker\big(v\;\!(\P\otimes1)\;\!v\big)$. The second claim follows from Lemma
\ref{Cor_magique} applied with $Z_n=B_nS_0$ and
$$
A_0
=S_0M_1(0)S_0
=S_0\left\{N_1(0)+u
+\sum_{n\notin N}v\big(\P_n\otimes R^0(\lambda-\lambda_n)\big)v\right\}S_0
$$
if one takes into account the relations $S_0S_1=S_1=S_1S_0$.
\end{proof}

For what follows, we recall that $Q$ is the multiplication operator by the variable in
$\ltwo(\R)$.

\begin{Lemma}\label{la_cle_des_champs}
One has
\begin{enumerate}
\item[(a)] $XS_2=0=S_2X$, with $X$ the real part of the operator $M_1(0)$,
\item[(b)] $S_2\;\!(1\otimes Q)\;\!v\;\!(f_n\otimes1)=0\,$ for all $n\in N$,
\item[(c)] $M_1(0)S_2=0=S_2M_1(0)$.
\end{enumerate}\end{Lemma}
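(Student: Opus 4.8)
The plan is to exploit the structure already established for $M_1(0)$ and $M_2(0)$ in the proof of Proposition \ref{Prop_Asymp} together with the chain of kernel projections $S_0 \supseteq S_1 \supseteq S_2$. Recall that $M_1(0)=X+iZ^*Z$ with $X,Z$ bounded self-adjoint, that $I_1(0)=S_0M_1(0)S_0 = S_0XS_0 + i(ZS_0)^*(ZS_0)$, that $S_1\H = \ker\big(I_1(0)\big)$, and that we already derived $ZS_1=0$ and $S_1Z^*=0$. For (a), I would start from $I_2(0) = S_1M_2(0)S_1$ and the formula \eqref{devM20}, namely $M_2(0) = S_0N_2(0)S_0 - 2S_0M_1(0)\big(I_0(0)+S_0\big)^{-1}M_1(0)S_0$. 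Conjugating by $S_1$ and using $ZS_1 = 0 = S_1Z^*$ to kill the imaginary part of $M_1(0)$ in the second term (as was done in the proof of Proposition \ref{Prop_Asymp}), one gets $I_2(0) = S_1N_2(0)S_1 - 2S_1X\big(I_0(0)+S_0\big)^{-1}XS_1$. The operator $N_2(0)$, whose kernel is given by \eqref{N20}, is of the form $(\text{const})\sum_{n\in N} |x-x'|^2$-type, i.e.\ a positive combination coming from $\big\langle\,\cdot\,, (1\otimes Q)v(f_n\otimes 1)\big\rangle$ terms; more precisely $S_0N_2(0)S_0$ can be written as $\tfrac14\sum_{n\in N}\big(Y_n^* Y_n\big)$ where $Y_n := $ the relevant map built from $(1\otimes Q)v(f_n\otimes 1)$ composed with $S_0$, after expanding $|x-x'|^2 = x^2 - 2xx' + x'^2$ and using Lemma \ref{relations_simples}(a) (which gives $(\P_n\otimes 1)vS_0 = 0$ for $n\in N$) to discard the $x^2$ and $x'^2$ pieces, leaving the rank-controlled cross term $x x'$. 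So $I_2(0)$ is a sum $S_1XS_1'$-type self-adjoint operator plus a manifestly non-negative operator; more cleanly, I'd arrange $I_2(0) = -2S_1X(I_0(0)+S_0)^{-1}XS_1 + \text{(non-negative)}$, so that on $\ker\big(I_2(0)\big) = S_2\H$ the non-negative part must vanish. The vanishing of the non-negative part forces $Y_n S_2 = 0$, which is exactly (b); and once (b) holds, the remaining self-adjoint part of $I_2(0)$ restricted to $S_2\H$ is $-2S_2X(I_0(0)+S_0)^{-1}XS_2$, whose vanishing forces $(I_0(0)+S_0)^{-1/2}$-type positivity to give $XS_2=0$, which is (a).

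In more detail, for (a): writing $I_2(0) = A + B$ with $A := -2 S_1 X (I_0(0)+S_0)^{-1} X S_1$ and $B := S_1 N_2(0) S_1 \ge 0$, and noting $I_0(0)+S_0 = N_0 + S_0 \ge 0$ is positive and invertible so $(I_0(0)+S_0)^{-1}\ge 0$ and hence $A \le 0$, I get $I_2(0) \le 0$. But $I_2(0)$ is self-adjoint in the finite-dimensional space $S_1\H$; on its kernel $S_2\H$ we have $\langle \varphi, I_2(0)\varphi\rangle = 0$, and since $I_2(0)\le 0$ with $B\ge 0$ and $A\le 0$ this forces both $\langle\varphi,A\varphi\rangle = 0$ and $\langle\varphi,B\varphi\rangle=0$ for $\varphi\in S_2\H$. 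From $\langle\varphi,A\varphi\rangle = -2\|(I_0(0)+S_0)^{-1/2}XS_2\varphi\|^2 = 0$ and invertibility of $(I_0(0)+S_0)^{-1/2}$ we conclude $XS_2 = 0$, hence also $S_2 X = 0$ by taking adjoints (recall $X$ and $S_2$ are self-adjoint). From $\langle\varphi, B\varphi\rangle = 0$ and the representation of $S_1 N_2(0) S_1$ as a sum of non-negative rank-one-type terms $\tfrac14\sum_{n\in N}\|Y_nS_2\varphi\|^2$ (with the $x^2,x'^2$ pieces removed via Lemma \ref{relations_simples}(a)), we get $Y_n S_2 = 0$ for all $n\in N$; unwinding the definition of $Y_n$ yields $S_2(1\otimes Q)v(f_n\otimes 1) = 0$, which is (b). Finally (c): since $M_1(0) = X + iZ^*Z$ and $S_2 \le S_1$, we have $S_2 Z^* = 0$ and $ZS_2 = 0$ (because $ZS_1 = 0$ already), so $M_1(0)S_2 = XS_2 + iZ^*ZS_2 = 0 + 0 = 0$, and symmetrically $S_2 M_1(0) = 0$.

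The main obstacle I anticipate is bookkeeping in the identification of $S_1 N_2(0) S_1$ as a genuinely non-negative operator of the advertised form: the kernel in \eqref{N20} involves $|x-x'|^2 = x^2 - 2xx' + x'^2$, and only after conjugating by $S_0$ (not $S_1$) and invoking Lemma \ref{relations_simples}(a) do the "diagonal" pieces $x^2$ and $x'^2$, which are not obviously sign-definite, drop out, leaving $-2xx'$; one then must check that $S_0\big(\text{the }{-}2xx'\text{ operator}\big)S_0$ equals $-\tfrac12\big(S_0(1\otimes Q)v\,\P\,v(1\otimes Q)S_0\big)$ up to the correct constant and hence that $S_1 N_2(0) S_1 = \tfrac12 \sum_{n\in N}\big((1\otimes Q)v(f_n\otimes 1)S_1\big)^*\big((1\otimes Q)v(f_n\otimes 1)S_1\big) \ge 0$ with the sign of the cross term cooperating with $A \le 0$. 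Getting the signs consistent (so that $I_2(0)\le 0$ rather than being indefinite) is the delicate point; everything else is a direct application of the elementary fact that a non-positive self-adjoint operator on a finite-dimensional space annihilates a vector iff each non-positive summand does, plus Lemmas \ref{relations_simples} and \ref{Cor_magique}.
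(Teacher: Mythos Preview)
Your overall strategy coincides with the paper's proof: reduce $I_2(0)$ on $S_2\H$ to a sum of two sign-definite pieces and use the kernel condition to force each piece to vanish separately, then read off (a), (b), (c). The gap is precisely at the point you flagged as delicate, namely the sign of $B := S_1 N_2(0) S_1$. After expanding $|x-x'|^2 = x^2 - 2xx' + x'^2$ and killing the $x^2$, $x'^2$ contributions via Lemma \ref{relations_simples}(a), the surviving cross term carries the coefficient $\tfrac14\cdot(-2) = -\tfrac12$, so in fact
\[
S_1 N_2(0) S_1
= -\tfrac12\sum_{n\in N} S_1\,\big|(1\otimes Q)\,v\,(f_n\otimes 1)\big\rangle
\big\langle(1\otimes Q)\,v\,(f_n\otimes 1)\big|\,S_1 \;\le\; 0,
\]
not $\ge 0$ as you wrote. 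With your stated signs ($A\le 0$, $B\ge 0$) the implication ``$\langle\varphi,A\varphi\rangle + \langle\varphi,B\varphi\rangle = 0$ forces both terms to vanish'' is simply false (take $-1+1=0$), and the claim $I_2(0)\le 0$ does not follow from $A\le 0$, $B\ge 0$ either.

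Once the sign is corrected, however, both $A\le 0$ and $B\le 0$, so $I_2(0)\le 0$ holds and the kernel argument goes through exactly as you outlined: for $\varphi\in S_2\H$ one has $0=\langle\varphi,A\varphi\rangle+\langle\varphi,B\varphi\rangle$ with both summands $\le 0$, hence each is zero, yielding (a) and (b); part (c) then follows from (a) and $ZS_2=0$. The paper phrases the same mechanism as the identity $\langle\varphi,N_2(0)\varphi\rangle = 2\langle\varphi,X(I_0(0)+S_0)^{-1}X\varphi\rangle$ for $\varphi\in S_2\H$, with left-hand side $\le 0$ and right-hand side $\ge 0$, hence both zero.
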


\begin{proof}
First, we recall from the proof of Proposition \ref{Prop_Asymp} that
$$
I_2(0)
=S_1M_2(0)S_1
=S_1N_2(0)S_1-2S_1X\big(I_0(0)+S_0\big)^{-1}X S_1,
$$
with $N_2(0)$ given (in the usual bra-ket notation) by
\begin{align*}
N_2(0)
=\frac14\sum_{n\in N}\big\{
&\big|(1\otimes Q^2)\;\!v\;\!(f_n\otimes1)\big\rangle\big\langle v\;\!(f_n\otimes1)\big|
+\big|v\;\!(f_n\otimes1)\big\rangle\big\langle(1\otimes Q^2)\;\!v\;\!(f_n\otimes1)\big|\\
&-2\;\!\big|(1\otimes Q)\;\!v\;\!(f_n\otimes1)\big\rangle
\big\langle(1\otimes Q)\;\!v\;\!(f_n\otimes1)\big|\big\}.
\end{align*}
Now, let $\varphi\in S_2\H$. Then, we have $I_2(0)\varphi=0$ and
\begin{align}\label{eq_rhume}
\big\langle\varphi,N_2(0)\varphi\big\rangle
=2\;\!\big\langle\varphi,X\big(I_0(0)+S_0\big)^{-1}X\varphi\big\rangle.
\end{align}
In addition, one infers from the relation $S_2=S_0S_2$ and Lemma
\ref{relations_simples}(a) that
$$
\big\langle\varphi,\big\{\big|(1\otimes Q^2)\;\!v\;\!(f_n\otimes1)\big\rangle
\big\langle v\;\!(f_n\otimes1)\big|\big\}\varphi\big\rangle
=\big\langle\varphi,(1\otimes Q^2)v\;\!(f_n\otimes1)\big\rangle
\big\langle S_0\;\!v\;\!(f_n\otimes1),\varphi\big\rangle
=0,
$$
and thus \eqref{eq_rhume} reduces to
$$
-\left\langle\varphi,\sum_{n\in N}\big\{\big|(1\otimes Q)\;\!v\;\!(f_n\otimes1)\big\rangle
\big\langle(1\otimes Q)\;\!v\;\!(f_n\otimes1)\big|\big\}\varphi\right\rangle
=4\left\langle\varphi,X\big(I_0(0)+S_0\big)^{-1}X\varphi\right\rangle.
$$
Since both operators are positive, both sides of the equality are equal to $0$.
This implies that
$$
\big\langle(1\otimes Q)\;\!v\;\!(f_n\otimes1),\varphi\big\rangle=0
~\hbox{for each}~n\in N\quad\hbox{and}\quad
\big\|\big(I_0(0)+S_0\big)^{-1/2}X\varphi\big\|^2=0,
$$
from which the points (a) and (b) are easily deduced.

Finally, we note that $M_1(0)S_2=XS_2$ and $S_2M_1(0)=S_2X$ due to the proof of
Proposition \ref{Prop_Asymp}. So, the point (c) follows from the point (a).
\end{proof}

%--------------------------------------------------------------------------------------
\subsection{Scattering theory and spectral representation}
%--------------------------------------------------------------------------------------

In this section, we recall some basics on the scattering theory for the pair
$\{H_0,H\}$ and on the spectral decomposition for $H_0$. As before, we assume that
$V\in\linf(\Omega;\R)$ has bounded support.

Under this assumption, it is a well-known that the wave operators
$$
W_\pm:=\slim_{t\to\pm\infty}\e^{itH}\e^{-itH_0}
$$
exist and are complete (see \cite[Cor.~3.5(b)]{Tie06}). As a consequence, the
scattering operator $S:=W_+^*W_-$ is a unitary operator in $\H$ which commutes with
$H_0$, and thus $S$ is decomposable in the spectral representation of $H_0$. So, in
order to proceed, we start by recalling the spectral representation of $H_0$. For that
purpose, we define for each $\lambda\in[\lambda_1,\infty)$ the finite set
$$
\N(\lambda):=\big\{n\ge1 \mid \lambda_n\le\lambda\big\}
$$
and the finite-dimensional space
$$
\Hrond(\lambda)
:=\bigoplus_{n\in\N(\lambda)}
\big\{\P_n\;\!\ltwo(\Sigma)\oplus\P_n\;\!\ltwo(\Sigma)\big\},
$$
with $\lambda_n$ and $\P_n$ as in Section \ref{secwave}. Note that $\Hrond(\lambda)$
is naturally embedded in
$
\Hrond(\infty)
:=\bigoplus_{n\ge1}\big\{\P_n\;\!\ltwo(\Sigma)\oplus\P_n\;\!\ltwo(\Sigma)\big\}
$.
Now, for any $\xi\in \R$, we let $\gamma(\xi):{\mathscr S}(\R)\to\C$ be the trace
operator given by $\gamma(\xi)f=f(\xi)$, with ${\mathscr S}(\R)$ the Schwartz space on
$\R$. Also, we define for each $\lambda\in[\lambda_1,\infty)\setminus\tau$ the
operator $T(\lambda):\ltwo(\Sigma)\odot{\mathscr S}(\R)\to\Hrond(\lambda)$ by
$$
\big(T(\lambda)\;\!\varphi\big)_n
:=(\lambda-\lambda_n)^{-1/4}
\big\{\big(\P_n\otimes\gamma(-\sqrt{\lambda-\lambda_n})\big)\varphi,
\big(\P_n\otimes\gamma(\sqrt{\lambda-\lambda_n})\big)\varphi\big\},
\quad n\in\N(\lambda).
$$
Some regularity properties of the map $\lambda\mapsto T(\lambda)$ have been
established in \cite[Lemma~2.4]{Tie06}, and additional properties are derived below
for the related map $\lambda\mapsto\F_0(\lambda)$ which we now define.

Let $\F:\ltwo(\R)\to\ltwo(\R)$ be the Fourier transform and let
$\Hrond:=\int_{[\lambda_1,\infty)}^\oplus\Hrond(\lambda)\,\d\lambda$. Then, the
operator $\F_0:\H\to\Hrond$ given by
$$
(\F_0\;\!\varphi)(\lambda)
\equiv\F_0(\lambda)\;\!\varphi
:=2^{-1/2}\;\!T(\lambda)(1\otimes \F)\;\!\varphi,
\quad\lambda\in[\lambda_1,\infty)\setminus\tau,
~\varphi\in\ltwo(\Sigma)\odot{\mathscr S}(\R),
$$
is unitary and satisfies
$\F_0H_0\F_0^*=\int_{[\lambda_1,\infty)}^\oplus\lambda\,\d\lambda$ (see
\cite[Prop.~2.5]{Tie06}). We shall need some expansions for the map
$\lambda\mapsto\F_0(\lambda)$ in neighbourhoods of points
$\lambda\in\tau\cup\sigma_{\rm p}(H)$. For this, we define for each
$\lambda>\lambda_1$, each $n\ge1$ such that $\lambda_n<\lambda$, and each
$\sigma\in\{+,-\}$
$$
\F_0(\lambda;n,\sigma)\;\!\varphi
:= 2^{-1/2}(\lambda-\lambda_n)^{-1/4}
\big(\P_n\otimes \gamma(\sigma\sqrt{\lambda-\lambda_n})\F\big)\varphi,
\quad\varphi\in\ltwo(\Sigma)\odot{\mathscr S}(\R).
$$
The operator
$\F_0(\lambda;n,\sigma):\ltwo(\Sigma)\odot{\mathscr S}(\R)\to\P_n\;\!\ltwo(\Sigma)$ is
defined on a slightly larger set of $\lambda$ than the operator
$\F_0(\lambda):\ltwo(\Sigma)\odot{\mathscr S}(\R)\to\Hrond(\lambda)$. Also, we define
the sets
$$
\partial O(\varepsilon)
:=\big\{\kappa\in\C\mid\kappa\in(0,\varepsilon)\cup(0,-i\varepsilon)\big\}
\subset\widetilde O(\varepsilon),
\quad\varepsilon>0,
$$
for which $-\kappa^2\in(-\varepsilon^2,\varepsilon^2)\setminus\{0\}$ if
$\kappa\in\partial O(\varepsilon)$, and we let $\ltwo_s(\R)$ be the domain of
$\langle Q\rangle^s$, $s\in\R$, endowed with the graph norm. Then, given
$\lambda\in\tau\cup\sigma_{\rm p}(H)$, we consider for each
$\kappa\in\partial O(\varepsilon)$ with $\varepsilon>0$ small enough the asymptotic
expansion in $\kappa$ of the operator $\F_0(\lambda-\kappa^2;n,\sigma)$. If
$\lambda_n<\lambda$, one has for $\kappa\in\partial O(\varepsilon)$ with
$\varepsilon>0$ small enough
$$
(\lambda-\kappa^2-\lambda_n)^{-1/4}
=(\lambda-\lambda_n)^{-1/4}\left(1+\frac{\kappa^2}{4(\lambda-\lambda_n)}
+\O(\kappa^4)\right).
$$
Similarly, if $s>0$ is big enough and if $\sigma\in\{+,-\}$, one has in
$\B\big(\ltwo_s(\R),\C\big)$
$$
\gamma(\sigma\sqrt{\lambda-\kappa^2-\lambda_n})\;\!\F
=\gamma(\sigma\sqrt{\lambda-\lambda_n})\;\!\F
\left(1+\frac{i\sigma\kappa^2}{2\sqrt{\lambda-\lambda_n}}\;\!Q\right)+\O(\kappa^4).
$$
As a consequence, we have in
$\B\big(\ltwo(\Sigma)\otimes\ltwo_s(\R);\P_n\;\!\ltwo(\Sigma)\big)$
\begin{equation}\label{dev1}
\F_0(\lambda-\kappa^2;n,\sigma)
=\F_0(\lambda;n,\sigma)\left(1+\frac{\kappa^2}{4(\lambda-\lambda_n)}
+\frac{i\sigma\kappa^2}{2\sqrt{\lambda-\lambda_n}}\;\!Q\right)+\O(\kappa^4).
\end{equation}
On the other hand, if $\lambda=\lambda_n\in\tau$ and $-\kappa^2>0$,
then one obtains in $\B\big(\ltwo(\Sigma)\otimes\ltwo_s(\R),\P_n\;\!\ltwo(\Sigma)\big)$
\begin{equation}\label{dev2}
\F_0(\lambda-\kappa^2;n,\sigma)
=(-\kappa^2)^{-1/4}\;\!\gamma_0(n)-i\sigma(-\kappa^2)^{1/4}\;\!\gamma_1(n)
+\O(|\kappa|^{3/2})
\end{equation}
with $\gamma_j(n):\ltwo(\Sigma)\otimes\ltwo_s(\R)\to\P_n\;\!\ltwo(\Sigma)$ the
operator given by
$$
\big(\gamma_j(n)\varphi\big)(\omega)
:=\frac1{2\;\!j!\sqrt\pi}\int_\R x^j\big((\P_n\otimes1)\varphi\big)(\omega,x)\,\d x
\quad\hbox{for almost every }\omega\in\Sigma.
$$

With these expansions at hand, we can start the study of the regularity properties of
the scattering matrix at thresholds or at embedded eigenvalues. Before that, we just
need to give a final auxiliary result. Recall that the orthogonal projections $S_0$
and $S_1$ have been introduced in the proof of Proposition \ref{Prop_Asymp}.

\begin{Lemma}\label{help1}
Take $\lambda\in\tau$, $\sigma\in\{+,-\}$, and $\kappa\in\partial O(\varepsilon)$ with
$\varepsilon>0$ small enough.
\begin{enumerate}
\item[(a)] For $n\ge1$ such that $\lambda_n<\lambda$, one has
$\F_0(\lambda-\kappa^2;n,\sigma)\;\!vS_1\in\O(\kappa^2)$.
\item[(b)] For $n\ge1$ such that $\lambda_n=\lambda$ and for $-\kappa^2>0$, one has
$\F_0(\lambda-\kappa^2;n,\sigma)\;\!vS_0=0$.
\end{enumerate}
\end{Lemma}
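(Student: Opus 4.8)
The plan is to treat the two assertions independently, reducing each to the vanishing relations of Lemma \ref{relations_simples} together with the expansions \eqref{dev1}--\eqref{dev2} of $\F_0$.

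Part (b) should be essentially immediate. Since $\lambda_n=\lambda$ we have $n\in N$, so Lemma \ref{relations_simples}(a) gives $(\P_n\otimes1)\;\!vS_0=0$. For $-\kappa^2>0$ the operator $\F_0(\lambda-\kappa^2;n,\sigma)$ is defined, and by its very definition it factors as
$$
\F_0(\lambda-\kappa^2;n,\sigma)
=2^{-1/2}(-\kappa^2)^{-1/4}\big(1\otimes\gamma(\sigma\sqrt{-\kappa^2})\F\big)(\P_n\otimes1),
$$
understood as a bounded operator from $\ltwo(\Sigma)\otimes\ltwo_s(\R)$ (with $s$ large enough) to $\P_n\ltwo(\Sigma)$. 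Composing on the right with $vS_0$, which maps $\H$ boundedly into $\ltwo(\Sigma)\otimes\ltwo_s(\R)$ for every $s$ because $v$ has bounded support, and using $(\P_n\otimes1)\;\!vS_0=0$, one gets $\F_0(\lambda-\kappa^2;n,\sigma)\;\!vS_0=0$.

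For part (a) the strategy is to combine the first-order expansion \eqref{dev1} with the relation $\F_0(\lambda;n,\sigma)\;\!vS_1=0$, and to obtain the latter I would invoke Lemma \ref{relations_simples}(b). That lemma requires exhibiting an operator $B_n\in\B(\H,\G)$ with $B_n^*B_n=\im\big\{v\big(\P_n\otimes R^0(\lambda-\lambda_n)\big)v\big\}$; the natural choice is $\G:=\P_n\ltwo(\Sigma)\oplus\P_n\ltwo(\Sigma)$ and $B_n:=\sqrt\pi\,\big(\F_0(\lambda;n,+)v\big)\oplus\big(\F_0(\lambda;n,-)v\big)$, which is bounded because $v$ has bounded support and each $\F_0(\lambda;n,\sigma)$ is bounded on $\ltwo(\Sigma)\otimes\ltwo_s(\R)$. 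The main (and really the only nontrivial) point of the proof is then the Stone-type identity
$$
\im\big\{v\big(\P_n\otimes R^0(\lambda-\lambda_n)\big)v\big\}
=\pi\sum_{\sigma\in\{+,-\}}\big(\F_0(\lambda;n,\sigma)\;\!v\big)^*\big(\F_0(\lambda;n,\sigma)\;\!v\big),\qquad\lambda_n<\lambda,
$$
which I would check by a direct kernel computation (it is the spectral-density factorization for $H_0$, possibly already recorded in \cite{Tie06}): by \eqref{eq_noyau_1} one has $\im\big(R^0(\lambda-\lambda_n)(x,x')\big)=\tfrac1{2\sqrt{\lambda-\lambda_n}}\cos\!\big(\sqrt{\lambda-\lambda_n}\,(x-x')\big)$, and writing $\cos\theta=\tfrac12(\e^{i\theta}+\e^{-i\theta})$ exhibits this as exactly $\pi$ times the sum over $\sigma$ of the kernels of $\F_0(\lambda;n,\sigma)^*\F_0(\lambda;n,\sigma)$; conjugating by $v$ gives the displayed identity. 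Granting it, Lemma \ref{relations_simples}(b) yields $B_nS_1=0$, that is, $\F_0(\lambda;n,\sigma)\;\!vS_1=0$ for both signs $\sigma$.

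To conclude part (a) I would post-compose the expansion \eqref{dev1} with $vS_1$ (again bounded from $\H$ into $\ltwo(\Sigma)\otimes\ltwo_s(\R)$), giving
$$
\F_0(\lambda-\kappa^2;n,\sigma)\;\!vS_1
=\F_0(\lambda;n,\sigma)\;\!vS_1+\frac{\kappa^2}{4(\lambda-\lambda_n)}\;\!\F_0(\lambda;n,\sigma)\;\!vS_1
+\frac{i\sigma\kappa^2}{2\sqrt{\lambda-\lambda_n}}\;\!\F_0(\lambda;n,\sigma)(1\otimes Q)\;\!vS_1+\O(\kappa^4).
$$
The first two terms vanish by the relation just established, the third is $\O(\kappa^2)$ because $\F_0(\lambda;n,\sigma)(1\otimes Q)\;\!v\in\B\big(\H,\P_n\ltwo(\Sigma)\big)$ (once more using that $v$ has bounded support), and the remainder is $\O(\kappa^4)\subset\O(\kappa^2)$; hence $\F_0(\lambda-\kappa^2;n,\sigma)\;\!vS_1\in\O(\kappa^2)$. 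Apart from the kernel identity, the only thing requiring care is keeping track of the weighted spaces $\ltwo(\Sigma)\otimes\ltwo_s(\R)$ so that all the compositions with $v$, $1\otimes Q$, and the trace operators $\F_0(\lambda;n,\sigma)$ are genuinely bounded; the rest is bookkeeping with Lemma \ref{relations_simples} and \eqref{dev1}.
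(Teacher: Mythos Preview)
Your proposal is correct and follows essentially the same route as the paper's own proof: for (b) you factor $\F_0(\lambda-\kappa^2;n,\sigma)$ through $(\P_n\otimes1)$ and invoke Lemma \ref{relations_simples}(a), and for (a) you build the same operator $B_n=\sqrt\pi\big(\F_0(\lambda;n,-)v,\F_0(\lambda;n,+)v\big)$, verify $B_n^*B_n=\im\big\{v\big(\P_n\otimes R^0(\lambda-\lambda_n)\big)v\big\}$, apply Lemma \ref{relations_simples}(b), and then read off the $\O(\kappa^2)$ from \eqref{dev1}. The only difference is that you spell out the Stone-type kernel identity and the post-composition with $vS_1$ explicitly, whereas the paper records these as one-line observations.
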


\begin{proof}
(a) Due to the expansion \eqref{dev1}, it is sufficient to show the equality
$\F_0(\lambda;n,\sigma)vS_1=0$. For that
purpose, we define the operator $B_n:\H\to\P_n\;\!\ltwo(\Sigma)\oplus\P_n\;\!\ltwo(\Sigma)$
by
$$
B_n\;\!\varphi
:=\pi^{1/2}\big\{\F_0(\lambda;n,-)\;\!v\;\!\varphi,
\F_0(\lambda;n,+)\;\!v\;\!\varphi\big\},
$$
and note that $B_n^*B_n=\im\big\{v\big(\P_n\otimes R^0(\lambda-\lambda_n)\big)v\big\}$.
The mentioned equality then follows from Lemma \ref{relations_simples}(b).

(b) The claim is a direct consequence of the identity
$$
\F_0(\lambda-\kappa^2;n,\sigma)\;\!vS_0
=\F_0(\lambda-\kappa^2;n,\sigma)(\P_n\otimes1)\;\!vS_0
$$
and Lemma \ref{relations_simples}(a).
\end{proof}

%--------------------------------------------------------------------------------------
\subsection{Continuity of the scattering matrix}
%--------------------------------------------------------------------------------------

Since the scattering operator $S$ commutes with $H_0$, it follows from the spectral
decomposition of $H_0$ that
$$
\F_0\;\!S\;\!\F_0^*=\int_{[\lambda_1,\infty)}^\oplus S(\lambda)\,\d\lambda,
$$
where $S(\lambda)$, the scattering matrix at energy $\lambda$, is defined and is a
unitary operator in $\Hrond(\lambda)$ for almost every
$\lambda\in[\lambda_1,\infty)$. In addition, one can obtain a convenient stationary
formula for $S(\lambda)$ using time-dependent scattering theory. For instance, if one
uses the results of \cite[Sec.~3.1]{Tie06} and relation \eqref{resolv_eq}, one
obtains for each $\lambda\in[\lambda_1,\infty)\setminus\{\tau\cup\sigma_{\rm p}(H)\}$
the equality in $\B\big(\Hrond(\lambda)\big)$
$$
S(\lambda)
=1-2\pi i\;\!\F_0(\lambda)\;\!v\big(u+vR_0(\lambda)\;\!v\big)^{-1}v\;\!\F_0(\lambda)^*,
$$
and that the map
$$
[\lambda_1,\infty)\setminus\{\tau\cup\sigma_{\rm p}(H)\}\ni\lambda\mapsto S(\lambda)
\in\Hrond(\infty)
$$
is a $k$-times continuously differentiable, for any $k\ge0$.

Since the regularity of the map $\lambda\mapsto S(\lambda)$ is already known when
$\lambda\in[\lambda_1,\infty)\setminus\{\tau \cup \sigma_{\rm p}(H)\}$, we now
describe the behavior of $S(\lambda)$ as $\lambda$ approaches points of
$\tau\cup\sigma_{\rm p}(H)$. To do this, we decompose the scattering matrix
$S(\lambda)$ into a collection of channel scattering matrices corresponding to the
transverse modes of the waveguide. Namely, for
$\lambda\in[\lambda_1,\infty)\setminus\{\tau\cup\sigma_{\rm p}(H)\}$, for $n,n'\ge1$
such that $\lambda_n<\lambda$ and $\lambda_{n'}<\lambda$, and for
$\sigma,\sigma'\in\{+,-\}$, we define the operators
$
S(\lambda;n,\sigma,n',\sigma')
\in\B\big(\P_{n'}\;\!\ltwo(\Sigma),\P_n\;\!\ltwo(\Sigma)\big)
$
by
$$
S(\lambda;n,\sigma,n',\sigma')
:=\delta_{n\sigma n'\sigma'}-2\pi i\;\!\F_0(\lambda;n,\sigma)\;\!v
\big(u+vR_0(\lambda)\;\!v\big)^{-1}v\;\!\F_0(\lambda;n',\sigma')^*
$$
with $\delta_{n\sigma n'\sigma'}:=1$ if $(n,\sigma)=(n',\sigma')$, and
$\delta_{n\sigma n'\sigma'}:=0$ otherwise.

We consider separately the continuity at thresholds and the continuity at embedded
eigenvalues, starting with the thresholds. Note that for each $\lambda\in\tau$, a
channel can either be already open (in which case one has to show the existence and
the equality of the limits from the right and from the left), or can open at the
energy $\lambda$ (in which case one has only to show the existence of the limit from
the right).

\begin{Proposition}\label{propcon}
Suppose that $V\in\linf(\Omega;\R)$ has bounded support and take $\lambda\in\tau$,
$\kappa\in\partial O(\varepsilon)$ with $\varepsilon>0$ small enough, $n,n'\ge1$, and
$\sigma,\sigma'\in\{+,-\}$.
\begin{enumerate}
\item[(a)] If $\lambda_n<\lambda$ and  $\lambda_{n'}<\lambda$, then the limit
$\lim_{\kappa\to0}S(\lambda-\kappa^2;n,\sigma,n',\sigma')$ exists.
\item[(b)] If $\lambda_n\le\lambda$, $\lambda_{n'}\le\lambda$ and $-\kappa^2>0$, then
the limit $\lim_{\kappa\to0}S(\lambda-\kappa^2 ;n,\sigma,n',\sigma')$ exists.
 \end{enumerate}
\end{Proposition}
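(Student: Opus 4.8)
The plan is to start from the stationary formula for the channel scattering matrices recalled just above, namely $S(\lambda-\kappa^2;n,\sigma,n',\sigma')=\delta_{n\sigma n'\sigma'}-2\pi i\,\F_0(\lambda-\kappa^2;n,\sigma)\,v\big(u+vR_0(\lambda-\kappa^2)\,v\big)^{-1}v\,\F_0(\lambda-\kappa^2;n',\sigma')^*$. First I would fix $\varepsilon>0$ small enough that, by the discreteness of $\tau$ and by Corollary~\ref{noaccp}, one has $\lambda-\kappa^2\notin\tau\cup\sigma_{\rm p}(H)$ for every $\kappa\in\partial O(\varepsilon)\setminus\{0\}$; for such $\kappa$ the formula is valid and, by Proposition~\ref{Prop_Asymp}, $\big(u+vR_0(\lambda-\kappa^2)\,v\big)^{-1}=\M(\lambda,\kappa)$ with $\M(\lambda,\kappa)$ given by the explicit expansion \eqref{eq_expansion_1}. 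Inserting \eqref{eq_expansion_1} decomposes $S(\lambda-\kappa^2;n,\sigma,n',\sigma')$ into the constant $\delta_{n\sigma n'\sigma'}$ plus four pieces, obtained by sandwiching between the ``probes'' $\F_0(\lambda-\kappa^2;n,\sigma)v$ and $v\F_0(\lambda-\kappa^2;n',\sigma')^*$ the four summands of \eqref{eq_expansion_1}, which carry the prefactors $\kappa$, $1$, $\kappa^{-1}$, $\kappa^{-2}$ in front of products of the operators $G_j(\kappa):=\big(I_j(\kappa)+S_j\big)^{-1}$ ($j=0,1,2$) and of $I_3(\kappa)^{-1}$. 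The whole proof reduces to showing that each of these four pieces converges as $\kappa\to0$. I will use throughout that $G_j(\kappa)\to G_j(0)$ with $G_j(0)S_k=S_k=S_kG_j(0)$ for $k\ge j$; that $I_3(\kappa)^{-1}$ converges, which is read off the inversion-formula manipulation around \eqref{souffrance1} (there $S_3-S_3(I_3(0)+S_3)^{-1}S_3$ was shown invertible); that $\F_0(\lambda-\kappa^2;n,\sigma)=\F_0(\lambda;n,\sigma)\big(1+\O(\kappa^2)\big)$ if $\lambda_n<\lambda$ by \eqref{dev1}, while $\F_0(\lambda-\kappa^2;n,\sigma)=(-\kappa^2)^{-1/4}\gamma_0(n)-i\sigma(-\kappa^2)^{1/4}\gamma_1(n)+\O(|\kappa|^{3/2})$ if $\lambda_n=\lambda$ and $-\kappa^2>0$ by \eqref{dev2}; and the commutator bounds $\big[S_j,G_k(\kappa)\big]\in\O(\kappa)$ of Lemma~\ref{com_simples}.

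For part (a) ($\lambda_n<\lambda$, $\lambda_{n'}<\lambda$) both probes are bounded and convergent, and by Lemma~\ref{help1}(a) they annihilate $S_1$ up to order $\kappa^2$. Since each singular product in \eqref{eq_expansion_1} begins, modulo $\O(\kappa)$-commutator corrections from Lemma~\ref{com_simples}, with a factor $S_1$ (use $G_0(0)S_0G_1(0)S_1=S_1$ and move $S_1$ leftward past the $G_j$'s), the probe turns the $\kappa^{-1}$-term into an $\O(\kappa)$ quantity and the $\kappa^{-2}$-term into an $\O(1)$ quantity with a finite limit (here one also uses the convergence of $I_3(\kappa)^{-1}$), while the $\kappa$-term visibly vanishes. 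Hence the sum converges, and the limit is the same whether $\kappa\to0$ along the real or the negative imaginary axis, which is the two-sided statement (a).

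For part (b) ($-\kappa^2>0$, $\lambda_n\le\lambda$, $\lambda_{n'}\le\lambda$) the delicate case is $\lambda_n=\lambda$ (or $\lambda_{n'}=\lambda$): the probe then blows up like $|\kappa|^{-1/2}$ but, by Lemma~\ref{help1}(b), it annihilates $S_0$ exactly, hence also $S_1$ and $S_2$. The hard part — which I expect to be the main obstacle — is the $\kappa^{-2}$-term: commuting $S_2$ to the far left through the $G_j$'s (each commutation costing $\O(\kappa)$ by Lemma~\ref{com_simples}) and using $\F_0(\lambda-\kappa^2;n,\sigma)vS_2=0$, the surviving leading contribution contains the factor $\gamma_0(n)v(N_0+S_0)^{-1}M_1(0)S_2$, which vanishes because $M_1(0)S_2=0$ (Lemma~\ref{la_cle_des_champs}(c)); this gains an extra power of $\kappa$ on each side and drives the $\kappa^{-2}$-term to $\O(|\kappa|)\to0$. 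For the $\kappa^{-1}$-term one argues similarly with $S_1$ in place of $S_2$, using $\F_0(\lambda-\kappa^2;n,\sigma)vS_1=0$, the relations $ZS_1=0=S_1Z^*$ from the proof of Proposition~\ref{Prop_Asymp}, and Lemma~\ref{com_simples}, to get an $\O(1)$ quantity whose limit is clean because the phase $\kappa/|\kappa|=-i$ is constant along $\kappa\in(0,-i\varepsilon)$. For the $\kappa^0$-term one uses $\gamma_0(n)v(N_0+S_0)^{-1}=\tfrac1{2\sqrt\pi}\,|f_n\rangle\langle g_n|$ with $g_n:=(N_0+S_0)^{-1}v(f_n\otimes1)\in(S_0\H)^\bot$, so that $\langle g_n|\,G_1(\kappa)=0$ kills the would-be $|\kappa|^{-1}$ leading term and leaves $\O(|\kappa|)\to0$; and the $\kappa$-term converges (to $-2i\,\gamma_0(n)v(N_0+S_0)^{-1}v\gamma_0(n')^*$ when $\lambda_n=\lambda=\lambda_{n'}$, and to $0$ otherwise). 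The subcases where $\lambda_n<\lambda$ and/or $\lambda_{n'}<\lambda$ are contained in, or simpler than, this analysis, Lemma~\ref{help1}(a) supplying the $\O(\kappa^2)$-annihilation of $S_1$ for the already open channels. Assembling the four limits proves (b).

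Thus the only genuinely technical step is the power counting of part (b): verifying that in each of the four terms the singular prefactor $\kappa^{-1}$ or $\kappa^{-2}$ is exactly compensated. This is precisely what the auxiliary results of the previous subsections were designed for — the commutator estimates of Lemma~\ref{com_simples}, the vanishing relations of Lemma~\ref{relations_simples}, and especially $M_1(0)S_2=S_2M_1(0)=0$ and $XS_2=S_2X=0$ from Lemma~\ref{la_cle_des_champs}; everything else is routine bookkeeping with \eqref{dev1}, \eqref{dev2} and the continuity of the $G_j(\kappa)$ and of $I_3(\kappa)^{-1}$.
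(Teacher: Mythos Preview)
Your proposal is correct and follows essentially the same route as the paper's proof. The only structural difference is packaging: the paper first rewrites $\M(\lambda,\kappa)$ in the form \eqref{grosse_formule}, where every occurrence of a projection $S_j$ has been systematically commuted to the outer edge of each summand (producing the commutators $C_{jk}(\kappa)$ explicitly), and only then sandwiches with the probes; you propose to perform these commutations on the fly, term by term. Both routes rest on the same ingredients --- Lemma~\ref{com_simples}, Lemma~\ref{help1}, the identity $M_1(0)S_2=0$ of Lemma~\ref{la_cle_des_champs}(c) (which is exactly what makes $C_{20}(\kappa)\in\O(\kappa^2)$, cf.\ \eqref{eq_C20}), the expansions \eqref{dev1}--\eqref{dev2}, and the boundedness of $I_3(\kappa)^{-1}$ --- and on the same power counting. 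Two cosmetic remarks: your statement ``$\langle g_n|\,G_1(\kappa)=0$'' should rather read $\langle g_n|\,S_0=0$ (since $G_1$ is only defined on $S_0\H$; the point is that $\gamma_0(n)v\,G_0(0)S_0=\gamma_0(n)vS_0=0$), and the relations $ZS_1=0=S_1Z^*$ are not actually needed for the $\kappa^{-1}$-term in part~(b) --- the estimate $C_{10}(\kappa)\in\Oa(\kappa)$ together with $\F_0 vS_1=0$ already suffices.
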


Before giving the proof, we define for $2\ge j\ge k\ge0$ the operators
$$
C_{jk}(\kappa):=\big[S_j,\big(I_k(\kappa)+S_k\big)^{-1}\big]\in\B(\H).
$$
We know from Lemma \ref{com_simples} that $C_{jk}(\kappa)\in\O(\kappa)$, but the
formulas \eqref{form_I_0}, \eqref{form_I_1} and \eqref{form_I_2} imply in fact that
$C_{jk}'(0):=\lim_{\kappa\to0}\frac1\kappa\;\!C_{jk}(\kappa)$ exists in $\B(\H)$. In
other cases, we use the notation $F(\kappa)\in\Oa(\kappa^n)$ for an operator
$F(\kappa)\in\O(\kappa^n)$ such that $\lim_{\kappa \to 0}\kappa^{-n}F(\kappa)$ exists
in $\B(\H)$. Finally, we note that \eqref{eq_expansion_1} can be rewritten as
\begin{align}
&\M(\lambda,\kappa)\nonumber\\
%%%%%%%%%%%%%%%%%%%%%%%%%%%%%%%%%%%%%%%%%%%%%% premiere egalite
&=2\kappa\big(I_0(\kappa)+S_0\big)^{-1}\nonumber\\
&\quad+\Big(S_0\big(I_0(\kappa)+S_0\big)^{-1}-C_{00}(\kappa)\Big)
S_0\big(I_1(\kappa)+S_1\big)^{-1}S_0
\Big(\big(I_0(\kappa)+S_0\big)^{-1}S_0+C_{00}(\kappa)\Big)\nonumber\\
&\quad+\frac1\kappa\big(I_0(\kappa)+S_0\big)^{-1}
\Big(S_1\big(I_1(\kappa)+S_1\big)^{-1}-S_0C_{11}(\kappa)\Big)
S_1\big(I_2(\kappa)+S_2\big)^{-1}S_1\nonumber\\
&\qquad\times\Big(\big(I_1(\kappa)+S_1\big)^{-1}S_1+C_{11}(\kappa)S_0\Big)
\big(I_0(\kappa)+S_0\big)^{-1}\nonumber\\
&\quad+\frac1{\kappa^2}\big(I_0(\kappa)+S_0\big)^{-1}S_0
\big(I_1(\kappa)+S_1\big)^{-1}
\Big(S_2\big(I_2(\kappa)+S_2\big)^{-1}-S_1C_{22}(\kappa)\Big)S_2I_3(\kappa)^{-1}S_2\nonumber\\
&\qquad\times\Big(\big(I_2(\kappa)+S_2\big)^{-1}S_2+C_{22}(\kappa)S_1\Big)
\big(I_1(\kappa)+S_1\big)^{-1}S_0\big(I_0(\kappa)+S_0\big)^{-1}\nonumber\\
&=2\kappa\big(I_0(\kappa)+S_0\big)^{-1}\nonumber\\
&\quad+\Big(S_0\big(I_0(\kappa)+S_0\big)^{-1}-C_{00}(\kappa)\Big)S_0
\big(I_1(\kappa)+S_1\big)^{-1}S_0
\Big(\big(I_0(\kappa)+S_0\big)^{-1}S_0+C_{00}(\kappa)\Big)\nonumber\\
&\quad+\frac1\kappa\bigg\{\Big(S_1\big(I_0(\kappa)+S_0\big)^{-1}-C_{10}(\kappa)\Big)
\big(I_1(\kappa)+ S_1\big)^{-1}-\Big(S_0\big(I_0(\kappa)+S_0\big)^{-1}
-C_{00}(\kappa)\Big)C_{11}(\kappa)\bigg\}\nonumber\\
&\qquad\times S_1\big(I_2(\kappa)+S_2\big)^{-1}S_1\bigg\{\big(I_1(\kappa)+S_1\big)^{-1}
\Big(\big(I_0(\kappa)+S_0\big)^{-1}S_1+C_{10}(\kappa)\Big)\nonumber\\
&\qquad+C_{11}(\kappa)\Big(\big(I_0(\kappa)+S_0\big)^{-1}S_0
+C_{00}(\kappa)\Big)\bigg\}\nonumber\\
%%%%%%%%%%%%%%%%%%%%%%%%%%%%%%%%%%%%%%%%%%%%%% dernier terme
&\quad+\frac1{\kappa^2}\Bigg\{\bigg[\Big(S_2\big(I_0(\kappa)+S_0\big)^{-1}
-C_{20}(\kappa)\Big)\big(I_1(\kappa)+S_1\big)^{-1}\nonumber\\
&\qquad-\Big(S_0\big(I_0(\kappa)+S_0\big)^{-1}-C_{00}(\kappa)\Big)C_{21}(\kappa)\bigg]
\big(I_2(\kappa)+S_2\big)^{-1}\nonumber\\
&\qquad-\bigg[\Big(S_1\big(I_0(\kappa)+S_0\big)^{-1}-C_{10}(\kappa)\Big)
\big(I_1(\kappa)+S_1\big)^{-1}\nonumber\\
&\qquad-\Big(S_0\big(I_0(\kappa)+S_0\big)^{-1}
-C_{00}(\kappa)\Big)C_{11}(\kappa)\bigg]C_{22}(\kappa)\Bigg\}S_2I_3(\kappa)^{-1}S_2\nonumber\\
&\qquad\times\Bigg\{\big(I_2(\kappa)+S_2\big)^{-1}\bigg[\big(I_1(\kappa)+S_1\big)^{-1}
\Big(\big(I_0(\kappa)+S_0\big)^{-1}S_2+C_{20}(\kappa)\Big)\nonumber\\
&\qquad+C_{21}(\kappa)\Big(\big(I_0(\kappa)+S_0\big)^{-1}S_0+C_{00}(\kappa)\Big)\bigg]\nonumber\\
&\qquad+C_{22}(\kappa)\bigg[\big(I_1(\kappa)+S_1\big)^{-1}
\Big(\big(I_0(\kappa)+S_0\big)^{-1}S_1+C_{10}(\kappa)\Big)\nonumber\\
&\qquad+C_{11}(\kappa)\Big(\big(I_0(\kappa)+S_0\big)^{-1}S_0
+C_{00}(\kappa)\Big)\bigg]\Bigg\}.\label{grosse_formule}
\end{align}
The interest in this formulation is that the projections $S_j$ (which lead to
simplifications in the proof) have been put into evidence at the beginning or at the
end of each term.

\begin{proof}
(a) Some lengthy, but direct, computations taking into account the expansion
\eqref{grosse_formule}, the relation $\big(I_j(0)+S_j\big)^{-1}S_j=S_j$, the expansion
\eqref{dev1} for $\F_0(\lambda-\kappa^2;n,\sigma)$ and
$\F_0(\lambda-\kappa^2;n',\sigma')$ and Lemma \ref{help1}(a) lead to the equality
\begin{align*}
&\lim_{\kappa\to0}\F_0(\lambda-\kappa^2;n,\sigma)v\;\!\M(\lambda,\kappa)
v\F_0(\lambda-\kappa^2;n',\sigma')^*\\
&=\F_0(\lambda;n,\sigma)\;\!vS_0\big(I_1(0)+S_1\big)^{-1}S_0v\F_0(\lambda;n',\sigma')^*\\
& \quad-\F_0(\lambda;n,\sigma)\;\!v
\big(C_{20}'(0)+S_0 C_{21}'(0)\big)S_2I_3(0)^{-1}S_2
\big(C_{20}'(0)+C_{21}'(0)S_0\big)v\F_0(\lambda;n',\sigma')^*.
\end{align*}
Since
\begin{equation}\label{start}
S(\lambda-\kappa^2 ;n,\sigma,n',\sigma')-\delta_{n\sigma n'\sigma'}
=-2\pi i\F_0(\lambda-\kappa^2;n,\sigma)v\;\!\M(\lambda,\kappa)v
\F_0(\lambda-\kappa^2;n',\sigma')^*,
\end{equation}
this proves the claim.

(b.1) We first consider the case $\lambda_n<\lambda$, $\lambda_{n'}=\lambda$ (the case
$\lambda_n=\lambda$, $\lambda_{n'}<\lambda$ is not presented since it is similar). An
inspection taking into account the expansion \eqref{grosse_formule}, the relation
$\big(I_j(\kappa)+S_j\big)^{-1}=\big(I_j(0)+S_j\big)^{-1}+\Oa(\kappa)$ and the relation
$\big(I_j(0)+S_j\big)^{-1}S_j=S_j$ leads to the equation
\begin{align}
&\F_0(\lambda-\kappa^2;n,\sigma)\;\!v\;\!\M(\lambda,\kappa)\;\!v\;\!
\F_0(\lambda-\kappa^2;n',\sigma')^*\nonumber\\
&=\F_0(\lambda-\kappa^2;n,\sigma)\;\!v\;\!
\bigg\{\Oa(\kappa)+S_0\big(I_1(\kappa)+S_1\big)^{-1}S_0\nonumber\\
&\quad+\frac1\kappa\big(S_1+\Oa(\kappa)\big)S_1
\big(I_2(\kappa)+S_2\big)^{-1}S_1\big(S_1+\Oa(\kappa)\big)\nonumber\\
&\quad+\frac1{\kappa^2}\Big[\Oa(\kappa^2)+S_2\big(I_0(\kappa)+ S_0\big)^{-1}
\big(I_1(\kappa)+S_1\big)^{-1}\big(I_2(\kappa)+S_2\big)^{-1}
-C_{20}(\kappa)-S_0C_{21}(\kappa)\nonumber\\
&\qquad-S_1C_{22}(\kappa)\Big]S_2 I_3(\kappa)^{-1}S_2
\Big[\Oa(\kappa^2)+\big(I_2(\kappa)+S_2\big)^{-1}\big(I_1(\kappa)+S_1\big)^{-1}
\big(I_0(\kappa)+S_0\big)^{-1}S_2\nonumber\\
&\qquad+C_{20}(\kappa)+C_{21}(\kappa)S_0+C_{22}(\kappa)S_1\Big]\bigg\}
\;\!v\;\!\F_0(\lambda-\kappa^2;n',\sigma')^*.\label{eq_bordeaux}
\end{align}
Applying Lemma \ref{help1} to the previous equation gives
\begin{align*}
&\F_0(\lambda-\kappa^2;n,\sigma)\;\!v\;\!\M(\lambda,\kappa)\;\!v\;\!
\F_0(\lambda-\kappa^2;n',\sigma')^*\\
&=\F_0(\lambda-\kappa^2;n,\sigma)\;\!v\;\!
\bigg\{\Oa(\kappa)-\frac1{\kappa^2}\big(\O(\kappa^2)+C_{20}(\kappa)+S_0C_{21}(\kappa)\big)
S_2 I_3(\kappa)^{-1}S_2\big(\Oa(\kappa^2)+C_{20}(\kappa)\big)\bigg\}\\
&\qquad\times v\;\!\F_0(\lambda-\kappa^2;n',\sigma')^*.
\end{align*}
Finally, taking into account the expansion \eqref{dev1} for
$\F_0(\lambda-\kappa^2;n,\sigma)$ and the expansion \eqref{dev2} for
$\F_0(\lambda-\kappa^2;n',\sigma')$, one ends up with
\begin{align}
&\F_0(\lambda-\kappa^2;n,\sigma)\;\!v\;\!\M(\lambda,\kappa)\;\!v\;\!
\F_0(\lambda-\kappa^2;n',\sigma')^*\nonumber\\
&=(-\kappa^2)^{-5/4}\F_0(\lambda;n,\sigma)\;\!v\;\!\big(\O(\kappa^2)+C_{20}(\kappa)
+S_0C_{21}(\kappa)\big)S_2I_3(\kappa)^{-1}S_2\big(\Oa(\kappa^2)+C_{20}(\kappa)\big)
\;\!v\;\!\gamma_0(n')^*\nonumber\\
&\quad+\O(|\kappa|^{1/2})\label{fin1},
\end{align}
where $\gamma_0(n')^*$ is given by
$\gamma_0(n')^*\psi=\frac1{2\sqrt\pi}\;\!\psi\otimes1$ for any
$\psi\in\P_{n'}\;\!\ltwo(\Sigma)$.

Now, Lemma \ref{la_cle_des_champs}(c) implies that $[M_1(0),S_2]=0$, and thus that
\begin{equation}\label{eq_C20}
C_{20}(\kappa)
=2\kappa\;\!\big(I_0(0)+S_0\big)^{-1}[M_1(0),S_2]\big(I_0(0)+S_0\big)^{-1}
+\O(\kappa^2)\\
=\O(\kappa^2).
\end{equation}
In consequence, one infers from \eqref{fin1} that
$
\F_0(\lambda-\kappa^2;n,\sigma)\;\!v\;\!\M(\lambda,\kappa)\;\!v\;\!
\F_0(\lambda-\kappa^2;n',\sigma')^*
$
vanishes as $\kappa\to0$, and thus that the limit
$\lim_{\kappa\to0}S(\lambda-\kappa^2;n,\sigma,n',\sigma')$ also vanishes by
\eqref{start}.

(b.2) We are left with the case $\lambda_n=\lambda=\lambda_{n'}$. An inspection of the
expansion \eqref{grosse_formule} taking into account the relation
$\big(I_\ell(\kappa)+S_\ell\big)^{-1}=\big(I_\ell(0)+S_\ell\big)^{-1}+\Oa(\kappa)$,
the relation $\big(I_\ell(0)+S_\ell\big)^{-1}S_\ell=S_\ell$ and Lemma \ref{help1}(b)
leads to the equation
\begin{align*}
&\F_0(\lambda-\kappa^2;n,\sigma)\;\!v\;\!\M(\lambda,\kappa)\;\!v\;\!
\F_0(\lambda-\kappa^2;n',\sigma')^*\\
&=\F_0(\lambda-\kappa^2;n,\sigma)\;\!v\;\!\bigg\{\Oa(\kappa^2)
+\kappa\big(I_0(\kappa)+S_0\big)^{-1}
-\frac1\kappa\;\!C_{10}(\kappa)S_1\big(I_2(\kappa)+S_2\big)^{-1}S_1C_{10}(\kappa)\\
&\quad-\frac1{\kappa^2}\;\!\big(\Oa(\kappa^2)+C_{20}(\kappa)\big)S_2 I_3(\kappa)^{-1}S_2
\big(\Oa(\kappa^2)+C_{20}(\kappa)\big)\bigg\}\;\!v\;\!\F_0(\lambda-\kappa^2;n',\sigma')^*.
\end{align*}
Therefore, the expansion \eqref{dev2} for $\F_0(\lambda-\kappa^2;n,\sigma)$ and
$\F_0(\lambda-\kappa^2;n',\sigma')$ and the inclusion $C_{20}(\kappa)\in\O(\kappa^2)$
(see \eqref{eq_C20}), imply that the limit
$$
\lim_{\kappa\to0}\F_0(\lambda-\kappa^2;n,\sigma)
\;\!v\;\!\M(\lambda,\kappa)\;\!v\;\!\F_0(\lambda-\kappa^2;n',\sigma')^*
$$
exists, and thus that the limit
$\lim_{\kappa\to0}S(\lambda-\kappa^2;n,\sigma,n',\sigma')$ also exists by \eqref{start}.
\end{proof}

We finally consider the continuity of the scattering matrix at embedded eigenvalues
not located at thresholds.

\begin{Proposition}
Suppose that $V\in\linf(\Omega;\R)$ has bounded support and take
$\lambda\in\sigma_{\rm p}(H)\setminus\tau$, $\kappa\in\partial O(\varepsilon)$ with
$\varepsilon>0$ small enough, $n,n'\ge1$, and $\sigma,\sigma'\in\{+,-\}$. Then, if
$\lambda_n<\lambda$ and $\lambda_{n'}<\lambda$, the limit
$\lim_{\kappa\to0}S(\lambda-\kappa^2;n,\sigma,n',\sigma')$ exists.
\end{Proposition}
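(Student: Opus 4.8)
The plan is to run, in a simplified form, the argument of the proof of Proposition \ref{propcon}(a), using this time the asymptotic expansion \eqref{eq_expansion_2} of $\M(\lambda,\kappa)$ instead of \eqref{eq_expansion_1}. As in \eqref{start}, one has
$$
S(\lambda-\kappa^2;n,\sigma,n',\sigma')-\delta_{n\sigma n'\sigma'}
=-2\pi i\,\F_0(\lambda-\kappa^2;n,\sigma)\;\!v\,\M(\lambda,\kappa)\,v\;\!\F_0(\lambda-\kappa^2;n',\sigma')^*,
$$
so it is enough to show that the right-hand side converges as $\kappa\to0$. Since $\lambda\in\sigma_{\rm p}(H)\setminus\tau$, the operator $\M(\lambda,\kappa)$ is given by \eqref{eq_expansion_2}, with $S$ the orthogonal projection on $\ker(T_0)$; moreover $\|J_1(\kappa)^{-1}\|_{\B(S\H)}$ stays bounded as $\kappa\to0$, since this is exactly the content of the last step (``the iterative procedure already stops here'') in part (ii) of the proof of Proposition \ref{Prop_Asymp}.

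The key point I would first establish is the identity $\F_0(\lambda;n,\sigma)\;\!vS=0$, valid for every $n$ with $\lambda_n<\lambda$ and every $\sigma\in\{+,-\}$; this is the exact analogue, for the operator $T_0$, of Lemma \ref{relations_simples}(b) and Lemma \ref{help1}(a). Indeed, since $\lambda\notin\tau$, one has $\im(T_0)=\sum_{m:\lambda_m<\lambda}\im\big\{v\big(\P_m\otimes R^0(\lambda-\lambda_m)\big)v\big\}$ (the summands with $\lambda_m>\lambda$ being self-adjoint because then $\lambda-\lambda_m<0$), and for each such $m$ one has $\im\big\{v\big(\P_m\otimes R^0(\lambda-\lambda_m)\big)v\big\}=B_m^*B_m$ with $B_m\varphi:=\pi^{1/2}\big\{\F_0(\lambda;m,-)v\varphi,\F_0(\lambda;m,+)v\varphi\big\}$, exactly as in the proof of Lemma \ref{help1}(a). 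Thus $T_0=X+i\sum_{m:\lambda_m<\lambda}B_m^*B_m$ with $X$ a bounded self-adjoint operator, the orthogonal projection $S$ on $\ker(T_0)=\ker(T_0^*)$ satisfies $T_0S=0=ST_0$, and Lemma \ref{Cor_magique} yields $B_mS=0$ and $SB_m^*=0$. In particular $\F_0(\lambda;m,\sigma)\;\!vS=0$ and $Sv\;\!\F_0(\lambda;m,\sigma)^*=0$ for all such $m$ and all $\sigma$, hence for $m=n$ and $m=n'$.

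It then remains to track orders in $\kappa$. From $(T_0+S)^{-1}S=S$, $J_0(\kappa)=T_0+\kappa^2T_1(\kappa)$ and the uniform boundedness of $\|T_1(\kappa)\|$, one gets $\big(J_0(\kappa)+S\big)^{-1}S=S+\O(\kappa^2)$ and $S\big(J_0(\kappa)+S\big)^{-1}=S+\O(\kappa^2)$. Combining these with the expansion \eqref{dev1} of $\F_0(\lambda-\kappa^2;n,\sigma)$ (and of $\F_0(\lambda-\kappa^2;n',\sigma')$), with the identity $\F_0(\lambda;n,\sigma)\;\!vS=0$ just proved, and with the fact that $V$ has bounded support (so that $Qv$ and $v$, viewed as a map into $\ltwo(\Sigma)\otimes\ltwo_s(\R)$, are bounded), one obtains
$$
\F_0(\lambda-\kappa^2;n,\sigma)\,v\,\big(J_0(\kappa)+S\big)^{-1}S\in\O(\kappa^2)
\quad\hbox{and}\quad
S\big(J_0(\kappa)+S\big)^{-1}v\,\F_0(\lambda-\kappa^2;n',\sigma')^*\in\O(\kappa^2).
$$
Plugging these two estimates into the $\kappa^{-2}$ term of \eqref{eq_expansion_2} and using the boundedness of $\|J_1(\kappa)^{-1}\|$, that singular term contributes only $\O(\kappa^2)$ to $\F_0(\lambda-\kappa^2;n,\sigma)v\,\M(\lambda,\kappa)\,v\F_0(\lambda-\kappa^2;n',\sigma')^*$, hence vanishes as $\kappa\to0$. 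The first term of \eqref{eq_expansion_2} contributes $\F_0(\lambda-\kappa^2;n,\sigma)\,v\,\big(J_0(\kappa)+S\big)^{-1}v\,\F_0(\lambda-\kappa^2;n',\sigma')^*$, which converges in operator norm to $\F_0(\lambda;n,\sigma)\,v\,(T_0+S)^{-1}v\,\F_0(\lambda;n',\sigma')^*$ since each of its three factors converges in norm. Altogether the limit $\lim_{\kappa\to0}S(\lambda-\kappa^2;n,\sigma,n',\sigma')$ exists and equals $\delta_{n\sigma n'\sigma'}-2\pi i\,\F_0(\lambda;n,\sigma)\,v\,(T_0+S)^{-1}v\,\F_0(\lambda;n',\sigma')^*$. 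I expect the only delicate point to be, as in the proof of Proposition \ref{propcon}, the bookkeeping of the orders in $\kappa$ of the various factors ensuring that the prefactor $\kappa^{-2}$ is absorbed; here, however, it is genuinely simpler, since only the regime $\lambda_n<\lambda$, $\lambda_{n'}<\lambda$ occurs, $\M(\lambda,\kappa)$ consists of only two terms, and the rough bound $\O(\kappa^2)$ (rather than $\Oa(\kappa^2)$) already suffices.
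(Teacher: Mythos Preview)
Your proof is correct and follows essentially the same approach as the paper: both use the expansion \eqref{eq_expansion_2}, establish $\F_0(\lambda;n,\sigma)vS=0$ via Lemma \ref{Cor_magique} applied to $T_0$, and then track orders in $\kappa$ (together with \eqref{dev1}) to absorb the $\kappa^{-2}$ prefactor. The only cosmetic difference is that you write $(J_0(\kappa)+S)^{-1}S=S+\O(\kappa^2)$ directly, whereas the paper phrases the same step as a commutation $(J_0(\kappa)+S)^{-1}S=S(J_0(\kappa)+S)^{-1}+\Oa(\kappa)$; your formulation is in fact slightly sharper and yields the cleaner conclusion that the singular term contributes $\O(\kappa^2)$ rather than merely $\O(1)$.
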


\begin{proof}
We know from \eqref{eq_expansion_2} that
$$
\M(\lambda,\kappa)
=\big(J_0(\kappa)+S\big)^{-1}+\frac1{\kappa^2}\;\!\big(J_0(\kappa)+S)^{-1}S
J_1(\kappa)^{-1}S\big(J_0(\kappa)+S\big)^{-1},
$$
with $S$ the Riesz projection associated with the value $0$ of the operator
$
T_0=u+\sum_nv\big(\P_n\otimes R^0(\lambda-\lambda_n)\big)\;\!v.
$
Now, a commutation of $S$ with $\big(J_0(\kappa)+S\big)^{-1}$ gives
$$
\M(\lambda,\kappa)
=\big(J_0(\kappa)+S\big)^{-1}
+\frac1{\kappa^2}\;\!\big\{S\big(J_0(\kappa)+S)^{-1}+\Oa(\kappa)\big\}
SJ_1(\kappa)^{-1}S\;\!\big\{\big(J_0(\kappa)+S\big)^{-1}S+\Oa(\kappa)\big\},
$$
and a computation as in the proof of Lemma \ref{help1}(a) (but which takes directly
Lemma \ref{Cor_magique} into account) shows that
$\F_0(\lambda-\kappa^2;n,\sigma)\;\!vS\in\O(\kappa^2)$ and
$Sv\;\!\F_0(\lambda-\kappa^2;n',\sigma')^* \in \O(\kappa^2)$. These estimates,
together with the expansion \eqref{dev1} for $\F_0(\lambda-\kappa^2;n,\sigma)$ and
$\F_0(\lambda-\kappa^2;n',\sigma')^*$ and the equation \eqref{start}, imply the claim.
\end{proof}

%--------------------------------------------------------------------------------------
%\bibliography{../bibliographie/bibliographie}
%--------------------------------------------------------------------------------------

%--------------------------------------------------------------------------------------

\end{document}